\newtheorem{theorem}{Theorem}
\newtheorem{proposition}{Proposition}
\newtheorem{remark}{Remark}
\newenvironment{proof}{{\indent \indent \it Proof:}}{\hfill $\blacksquare$\par}
\newcolumntype{P}[1]{>{\centering\arraybackslash}p{#1}} % centered p
\newcommand{\tr}{\operatorname{Tr}}
\begin{document}

\title{Decentralized Equalization for Massive MIMO Systems With Colored Noise Samples}

\author{Xiaotong Zhao,
 Mian Li,
Bo Wang,
 Enbin Song,
 Tsung-Hui Chang,
and Qingjiang Shi

\thanks{Part of this paper has been published in the IEEE Global Communications Conference (GLOBECOM), Madrid, Spain, Dec. 2021 \cite{zhao2021decentralized}.}       

\thanks{Xiaotong Zhao is with the School of Software Engineering,
Tongji University, Shanghai 201804, China (e-mail: xiaotongzhao@tongji.edu.cn).} 
\thanks{Mian Li and Tsung-Hui Chang are with the School of Science and Engineering, The Chinese University of Hong Kong, Shenzhen, and Shenzhen Research Institute of Big Data, Shenzhen 518172, China (e-mail: mianli1@link.cuhk.edu.cn; tsunghui.chang@ieee.org).}
\thanks{Bo Wang is with the Wireless Network RAN Algorithm Department, Xi'an Huawei Technologies Co. Ltd., Xi'an 710000, China (e-mail: wangbo169@huawei.com).}
\thanks{Enbin Song is with the College of Mathematics and School of Aeronautics
and Astronautics, Sichuan University, Chengdu, Sichuan 610064, China (e-mail:
e.b.song@163.com).}
\thanks{Qingjiang Shi is with the School of Software Engineering, Tongji University,
Shanghai 201804, China, and also with the Shenzhen Research Institute of Big
Data, Shenzhen 518172, China (e-mail: shiqj@tongji.edu.cn).}}

% The paper headers
% \markboth{Journal of \LaTeX\ Class Files,~Vol.~14, No.~8, August~2021}%
% {Shell \MakeLowercase{\textit{et al.}}: A Sample Article Using IEEEtran.cls for IEEE Journals}

\IEEEpubid{0000--0000/00\$00.00~\copyright~2022 IEEE}
% \IEEEpubidadjcol
% Remember, if you use this you must call \IEEEpubidadjcol in the second
% column for its text to clear the IEEEpubid mark.
\maketitle

\begin{abstract}
Recently, the decentralized baseband processing (DBP) paradigm and relevant detection methods have been proposed to enable extremely large-scale massive multiple-input multiple-output technology. Under the DBP architecture, base station antennas are divided into several independent clusters, each connected to a local computing fabric. However, current detection methods tailored to DBP only consider ideal white Gaussian noise scenarios, while in practice, the noise is often \emph{colored} due to interference from neighboring cells. Moreover, in the DBP architecture, linear minimum mean-square error (LMMSE) detection methods rely on the \emph{estimation} of the noise covariance matrix through averaging distributedly stored noise samples. This presents a significant challenge for decentralized LMMSE-based equalizer design. To address this issue, this paper proposes decentralized LMMSE equalization methods under colored noise scenarios for both star and daisy chain DBP architectures. Specifically, we first propose two decentralized equalizers for the star DBP architecture based on dimensionality reduction techniques. Then, we derive an optimal decentralized equalizer using the block coordinate descent (BCD) method for the daisy chain DBP architecture with a bandwidth reduction enhancement scheme based on decentralized low-rank decomposition. Finally, simulation results demonstrate that our proposed methods can achieve excellent detection performance while requiring much less communication bandwidth.
\end{abstract}

\begin{IEEEkeywords}
 Massive MIMO, decentralized baseband processing, data detection, LMMSE, colored noise.
\end{IEEEkeywords}

\section{Introduction}
\IEEEPARstart{M}{assive} multiple-input multiple-output (MIMO) is a critical technology for both fifth-generation (5G) and 6G systems due to its high spectral and power efficiency \cite{zhang2020prospective,marzetta2016fundamentals,wang2019overview}. With large-scale antenna arrays composed of hundreds or thousands of antennas, a base station (BS) can simultaneously serve multiple user equipments (UEs) simultaneously in the same time-frequency resource. Data detection techniques play a crucial role in the implementation of massive MIMO uplink. The optimal detector is the nonlinear maximum-likelihood detector \cite{rusek2012scaling}, but its complexity exponentially increases with the number of transmit antennas, making it infeasible for practical systems. Therefore, low-complexity linear equalization-based detection methods such as maximum ratio combining (MRC), zero-forcing (ZF), and linear minimum mean-square error (LMMSE) detectors are preferred. Among these methods, the LMMSE detector is widely used due to its near-optimal detection performance \cite{rusek2012scaling}.

\IEEEpubidadjcol

In practical massive MIMO systems, conventional LMMSE detection schemes rely on \emph{centralized baseband processing} (CBP) in a single computing fabric, as shown in Fig. \ref{fig_CBP_architecture}. 
However, with an increasing number of BS antennas, conventional centralized LMMSE detectors encounter two major bottlenecks: \emph{1) Excessive communication bandwidth:} The rapid growth of the number of BS antennas brings an exceedingly high amount of raw baseband data, including channel state information (CSI), received signal, and noise samples, that must be transferred between the radio-frequency (RF) chains and the centralized computing fabric, as shown in Fig. \ref{fig_CBP_architecture} with the red arrow lines \cite{li2017decentralized,li2016decentralized,sanchez2020decentralized}. This issue is particularly evident in a 256-antenna BS with an 80MHz bandwidth and 12-bit digital-to-analog converters, where the raw baseband data throughput can reach 1Tbps, significantly exceeding the existing capacity of BS internal interface standards \cite{eCPRI}. \emph{2) High computational complexity:} Traditional LMMSE equalization typically involves a high-dimensional matrix inversion operation with a complexity cubic in the number of BS antennas. This results in a formidable requirement for computation capability, making the CBP architecture impractical for massive MIMO settings  \cite{li2017decentralized}.

To address the limitations of traditional CBP architectures, recent studies have explored a promising alternative called \emph{decentralized baseband processing} (DBP) \cite{li2017decentralized,li2016decentralized,sanchez2020decentralized,jeon2019decentralized,li2019decentralized,jeon2017achievable,wang2020expectation,zhang2020decentralized,amiri2021uncoordinated,sanchez2019decentralized,zhang2021decentralized,kulkarni2021hardware}. As shown in Fig.~\ref{fig_DBP_architecture}, DBP replaces the conventional centralized computing fabric with multiple local computing fabrics called distributed units (DUs). Additionally, the BS antennas are divided into several independent clusters, each connected to a DU. The DUs communicate with one another through a topology such as a star or a daisy chain, as shown in Fig. \ref{fig_DBP_architecture}(a) and (b), respectively. As a result, the local information (e.g., CSI, received signals, noise samples) is stored locally at each DU, enabling decentralized data detection with moderate information exchange among DUs. Compared to its CBP counterpart, the DBP paradigm effectively mitigates the bandwidth and computation bottlenecks.

\begin{figure}[!tb]
    \centering
    \includegraphics[width=0.45\textwidth]{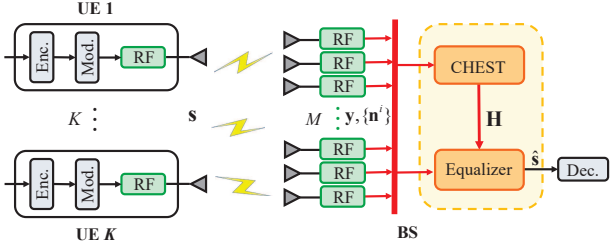}
    \caption{Illustration of a BS in an uplink massive MIMO system with centralized baseband processing architecture (Enc. = Encoder, Mod. = Modulator, CHEST = Channel Estimator, and Dec. = Decoder).}
    \label{fig_CBP_architecture}
\end{figure}

Numerous studies have investigated detection and equalization designs in DBP architectures. To circumvent the complicated matrix inversion in centralized ZF and LMMSE equalizers, several decentralized iterative methods such as conjugate gradient \cite{li2017decentralized}, alternating direction method of multipliers \cite{li2016decentralized}, coordinate descent \cite{li2019decentralized}, and Newton methods \cite{kulkarni2021hardware} were proposed. These iterative methods combined the local matched filter, local Gram matrix, and intermediate variables in a decentralized manner. Some other works proposed maximum a posterior estimation-based decentralized detection algorithms to attain higher detection performance at the expense of increased computational complexity, such as large-MIMO approximate message passing \cite{jeon2017achievable}, expectation propagation \cite{wang2020expectation,zhang2020decentralized}, and Gaussian message passing \cite{zhang2021decentralized}.
Notably, all of the aforementioned methods aimed to estimate symbols rather than obtain an equalization matrix directly. \emph{However, equalization methods have a significant advantage over symbol estimation methods: the equalization matrix can be reused across multiple coherence blocks of channels}. In contrast, the symbol estimation algorithms mentioned above must be performed for each channel, resulting in high computational complexity and communication bandwidth. In light of this consideration, the authors in \cite{jeon2019decentralized} presented a decentralized implementation that directly obtained MRC, ZF, and LMMSE equalization matrices in a feedforward DBP architecture. Meanwhile, the work \cite{sanchez2020decentralized} proposed a decentralized algorithm based on gradient descent to obtain the ZF equalization matrix for a daisy chain architecture.

It should be noted that prior works \cite{li2017decentralized,li2016decentralized,sanchez2020decentralized,li2019decentralized,jeon2019decentralized,jeon2017achievable,wang2020expectation,zhang2020decentralized,amiri2021uncoordinated,sanchez2019decentralized,zhang2021decentralized,kulkarni2021hardware} all assumed that the BS receiver noise is an ideal \emph{additive white Gaussian noise} (AWGN), which has a \emph{diagonal} noise covariance matrix. This allows for the covariance matrix to be naturally decomposed into multiple diagonal submatrices that perfectly fit the distributed implementation of LMMSE equalization in DBP architectures. However, in practical systems, interference from other non-target UEs must be modeled as part of the noise, resulting in \emph{colored} noise with a \emph{non-diagonal} covariance matrix\footnote{Colored noise can be caused by several factors other than non-target UEs, such as channel estimation error \cite{helmersson2022uplink,shaik2021mmse}. Nevertheless, our analysis applies to all such factors.}. The assumption of AWGN no longer holds in this scenario. Moreover, the exact noise covariance matrix at the BS is often unknown and must be estimated by averaging a finite number of noise samples. In the DBP architectures, each DU only has local noise samples with respect to the corresponding antenna cluster. Consequently, computing the non-diagonal covariance matrix of colored noise requires collecting noise samples from all clusters. However, this is hampered by prohibitively high communication bandwidth and computational complexity, as the sample dimension is related to the number of BS antennas (which can be extremely large). Therefore, \emph{LMMSE equalization in DBP architectures under the colored noise remains a significant challenge, necessitating a completely new algorithmic design that accounts for limited communication bandwidth and low computational complexity}. This paper seeks to address this challenge.

In summary, our main contributions are given as follows:

\begin{itemize}

\item \textbf{Colored Noise and Covariance Estimation in DBP.} To the best of our knowledge, all the existing detection methods designed for DBP focus solely on ideal white Gaussian noise scenarios. However, in reality, noise is often \emph{colored} due to interference from neighboring cells. Additionally, in the DBP architecture, LMMSE detection methods require the \emph{estimation} of the noise covariance matrix through distributed averaging of stored noise samples. This poses a significant challenge for decentralized LMMSE-based equalizer design. Notably, we are dedicated to addressing this challenge for the first time.

	\item \textbf{Decentralized Algorithm Design for the Star DBP Architecture.} By investigating the closed-form expression of the LMMSE equalization matrix, we propose two decentralized equalization algorithms that employ dimensionality reduction (DR) techniques to the star DBP architecture. In both methods, each DU compresses its local information into a low-dimensional representation via linear transformation and transmits it to the CU. The CU then either \emph{superimposes} or \emph{concatenates} the compressed data to perform LMMSE equalization. While the former method has a relatively higher error rate, it has a much lower complexity compared to the latter. Both methods reduce data transfer size from the number of BS antennas to the number of UEs, thus mitigating bandwidth and computation bottlenecks. Furthermore, these DR-based approaches have been shown to achieve performance closely matching that of centralized LMMSE equalization.

	\item  \textbf{Decentralized Algorithm Design for the Daisy Chain DBP Architecture.} To obtain an optimal decentralized equalizer, we reconsider the original optimization problem of LMMSE and investigate the distributed storage structure in the daisy chain DBP architecture. Then, we design an efficient decentralized iterative algorithm using the block coordinate descent (BCD) method \cite{bertsekas1999nonlinear} with guaranteed convergence. During the BCD iteration, data transfer size is significantly reduced, depending only on the number of UEs rather than BS antennas.

\item  \textbf{Low-Rank Decomposition of Covariance Matrix.} Note that the BCD-based algorithm still requires data transfer with a size related to the number of noise samples at each iteration. To tackle this issue, we utilize the approximately low-rank property of the noise covariance matrix and present a decentralized method for obtaining its low-rank decomposition. This approach replaces a large number of noise samples with only a few vectors, further reducing data transfer size at each iteration and making it independent of the number of noise samples. As a result, this method significantly alleviates the bandwidth and computation limitations while preserving near-optimal performance.

\end{itemize}

% The remainder of this paper is organized as follows. Section \ref{sec_system_model} introduces the uplink system model and the DBP architectures. Section \ref{DRMMSE_section} proposes two decentralized equalization methods using DR techniques for the star architecture. Section \ref{Sec_BCDMMSE} develops a BCD-based decentralized equalization method for the daisy chain architecture and a low-rank decomposition scheme for noise covariance matrix to further reduce the communication bandwidth. Section \ref{sec_computational_bandwidth} presents computational complexity and communication bandwidth analysis. The simulation results are provided in Section \ref{sec_simulation}. Finally, this article is concluded in Section \ref{sec_conclusion}.

\emph{Notations:} Throughout this paper, scalars are denoted by both lower and upper case letters, while vectors and matrices are denoted by boldface lower case and boldface upper case letters, respectively. The Euclidean norm of a vector $\mathbf{x}$ is defined as $\|\mathbf{x}\|_{2}=\sqrt{\mathbf{x}^{H}\mathbf{x}}$. For a matrix $\mathbf{A}$, $\mathbf{A}^{T}$, $\mathbf{A}^{H}$, $\mathbf{A}^{-1}$, $\text{Tr}(\mathbf{A})$ and  $R(\mathbf{A})$ denote its transpose, conjugate transpose, inverse, trace, and range space, respectively. $\mathbb{E}\left[ \cdot \right]$ denotes the expectation operation. The notation $\mathbf{I}$ is the identity matrix, and $\text{blkdiag}(\mathbf{A}_1,\ldots,\mathbf{A}_C)$ denotes a block diagonal matrix with $\mathbf{A}_1,\ldots,\mathbf{A}_C$ being its diagonal blocks.

\section{Systems Model and DBP Architectures}\label{sec_system_model}
In this section, we first introduce the uplink massive MIMO system model and the LMMSE equalization method. We then present two DBP architectures in uplink massive MIMO systems, Furthermore, we discuss the challenges associated with decentralized LMMSE design.
\subsection{Uplink System Model and LMMSE Equalization}
\subsubsection{Uplink System Model}Consider an uplink massive MIMO system where $K$ target UEs, each with a single antenna, transmit data to a BS equipped with $M$ antennas, where $M\gg K$. The received signal $\mathbf{y} \in \mathbb{C}^{M\times 1}$ at the BS is expressed as follows:
\begin{equation}\label{cen-model}
    \mathbf{y}=\mathbf{H}\mathbf{s}+\mathbf{n},
\end{equation}
where $\mathbf{H} \in \mathbb{C}^{M \times K}$ represents the channel matrix, which is assumed perfectly known at the BS \cite{lu2014overview}. Additionally, $\mathbf{s} \in \mathcal{S}^{K\times 1}$ denotes the transmitted symbol vector with $\mathcal{S}$ representing the constellation set for some modulation scheme (e.g., 16-QAM). It is further assumed that $\mathbf{n} \sim \mathcal{CN}(\mathbf{0},\mathbf{R})$ is a circularly symmetric complex Gaussian random vector, where $\mathbf{R}\triangleq \mathbb{E}\left[\mathbf{n}\mathbf{n}^{H}\right]$ is the covariance matrix.

\subsubsection{LMMSE Equalization}
We focus on obtaining an equalization matrix rather than directly estimating symbols. In a typical scenario, the channel impulse response is considered almost constant across $N_{\text{coh}}$ contiguous symbols \cite{sanchez2019decentralized}, allowing for the reuse of the equalization matrix. Thus, computing and reusing the equalization matrix is more economical than directly estimating symbols each time.

LMMSE equalization seeks a linear estimate by solving the following problem:
\begin{equation}\label{MMSE-model}
    \min _{\mathbf{W}}  \quad \mathbb{E}\left[\left\| \mathbf{W} \mathbf{y}-\mathbf{s} \right\|_2^{2}\right],
\end{equation}
which leads to the well-known LMMSE receiver\cite{kay1993fundamentals}:
\begin{equation}\label{MMSE-solution}
        \mathbf{W}_{\text{MMSE}} =\left(\mathbf{H}^H\mathbf{R}^{-1}\mathbf{H}+\frac{1}{E_{s}}\mathbf{I}\right)^{-1}\mathbf{H}^H\mathbf{R}^{-1},
    \end{equation}
where $E_s$ is the average energy per symbol. The LMMSE estimate $\hat{\mathbf{s}}_{\text{MMSE}}$ is obtained by applying the equalization matrix $\mathbf{W}_{\text{MMSE}}$ to the received signal $\mathbf{y}$, i.e., $\hat{\mathbf{s}}_{\text{MMSE}}=\mathbf{W}_{\text{MMSE}}\mathbf{y}$. Finally, the detector quantizes each entry of $\hat{\mathbf{s}}_{\text{MMSE}}$ to the nearest neighbor point in the constellation set $\mathcal{S}$.

In conventional centralized LMMSE equalization, computing the equalization matrix in \eqref{MMSE-solution} requires complete knowledge of $\mathbf{H}\in \mathbb{C}^{M\times K}$ and $\mathbf{R}$, which must be collected from the RF chains to a centralized computing fabric. Furthermore, the $M$-dimensional matrix inversion operation $\mathbf{R}^{-1}$ in \eqref{MMSE-solution} results in a computational complexity of $\mathcal{O}(M^3)$. Therefore, centralized processing incurs a huge bandwidth and computation burden that is unaffordable when $M$ is extremely large in massive MIMO settings.

\subsection{Decentralized Baseband Processing: Architectures, Challenges, and A Straightforward Solution}

\begin{figure*}[!tb]
    \centering
    \subfloat[Star DBP architecture ]{\includegraphics[width=0.44\textwidth]{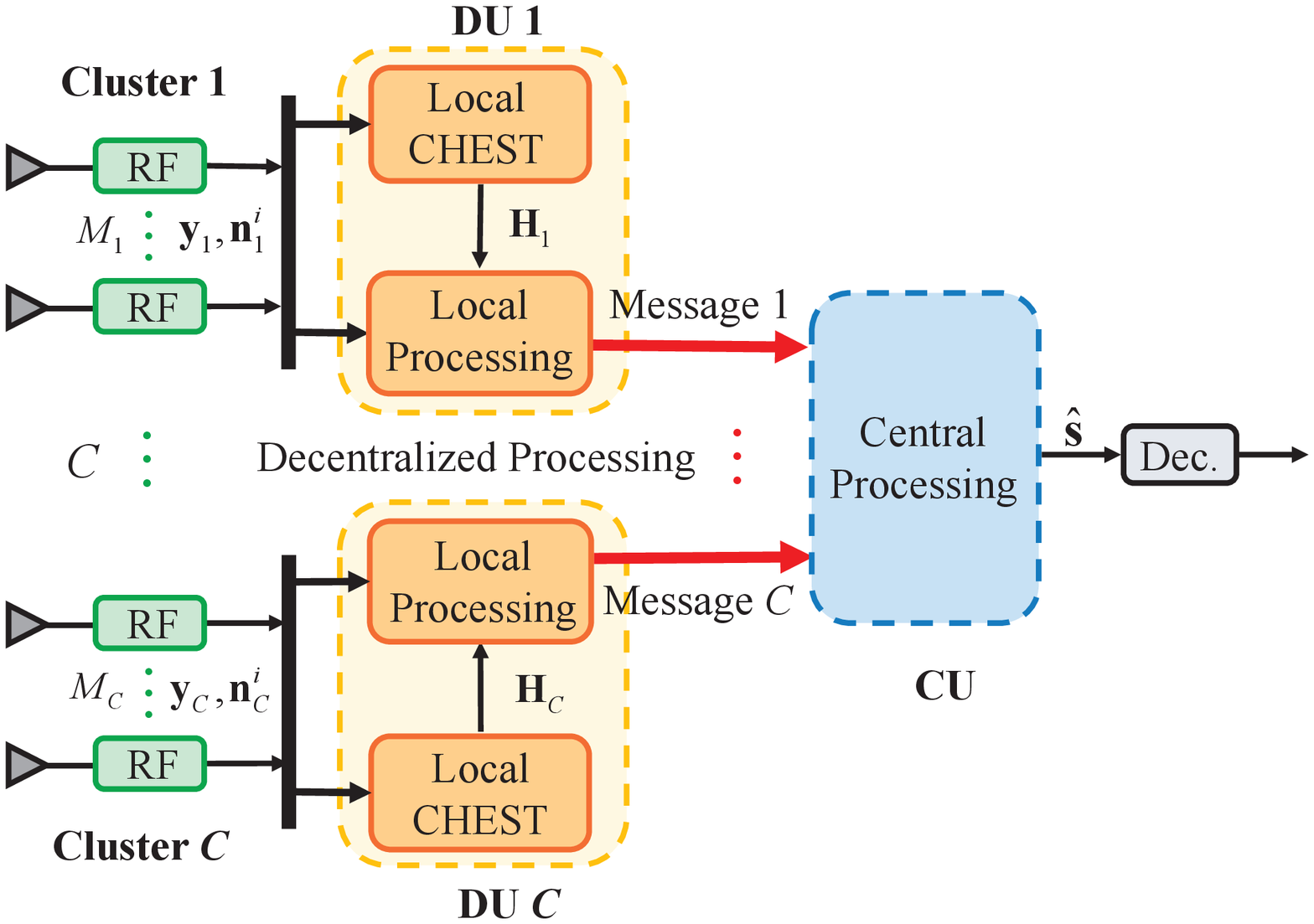}
        \label{fig_star_architecture}}
    \hfil
    \subfloat[Daisy chain DBP architecture]{\includegraphics[width=0.32\textwidth]{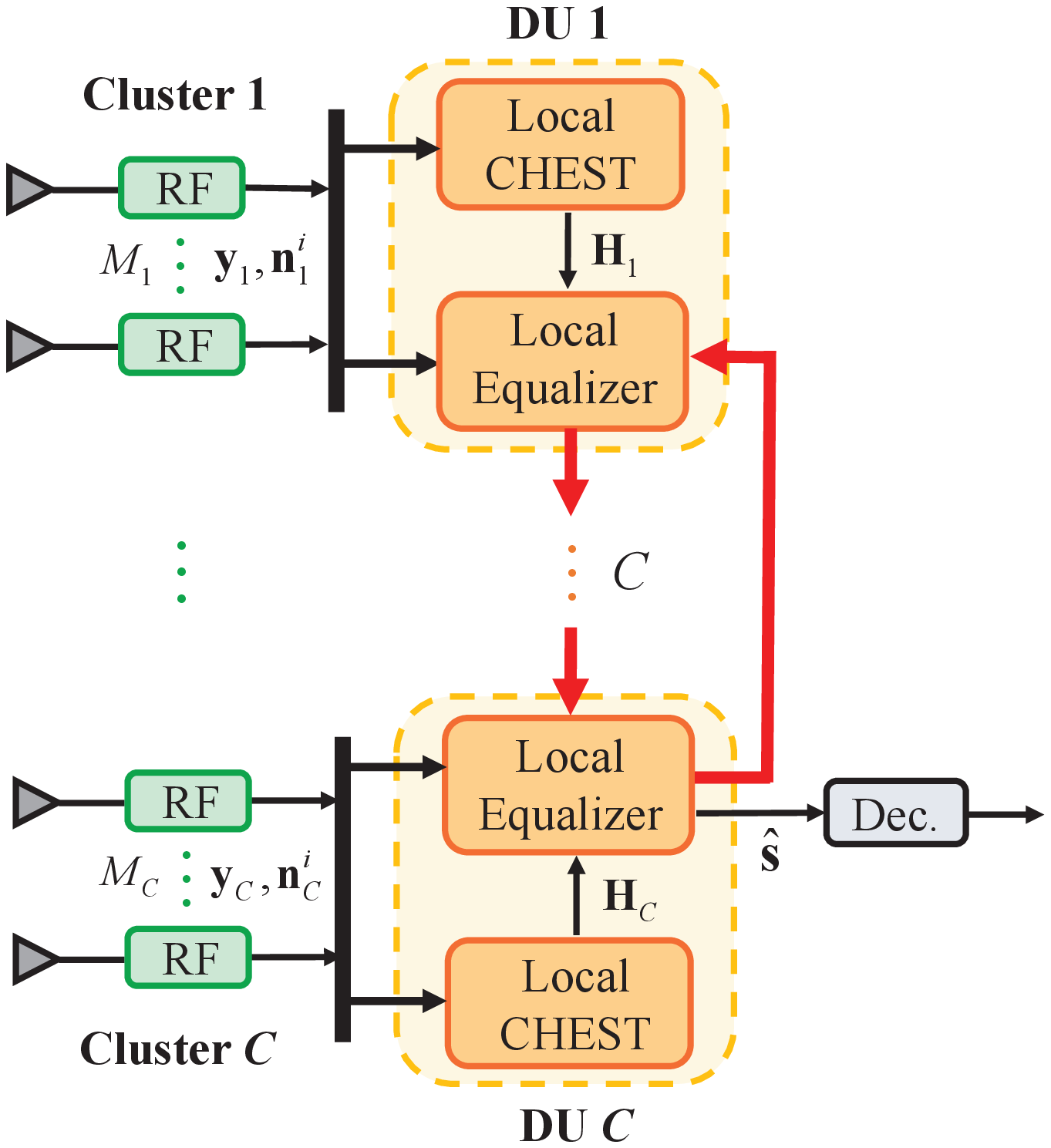}
        \label{fig_second_case}}
    \caption{The DBP architectures.}
    \label{fig_DBP_architecture}
\end{figure*}

\subsubsection{DBP Architectures}
To address the computation and communication limitations of traditional CBP architectures, recent studies have explored the DBP architecture\cite{li2017decentralized,li2016decentralized,sanchez2020decentralized}. As illustrated in Fig.~\ref{fig_DBP_architecture}, in the uplink DBP architectures, a total of $M$ BS antenna elements are divided into $C$ antenna clusters. The $c$-th cluster consists of $M_c$ antennas, where $M = \sum_{c=1}^C M_c$. Each cluster has its own local RF and computing fabric called DU.  

We consider two DBP architectures: the star architecture and the daisy chain architecture.
As shown in Fig. \ref{fig_DBP_architecture}(a), in the star DBP architecture, DUs are connected to a central unit (CU). Each DU performs data compression\footnote{Throughout this paper, the term “compress” or “compression” refers to dimensionality reduction, not quantization that outputs bits for digital transmission.} and transmits intermediate results to the CU in parallel for equalization.
The other considered DBP architecture is the daisy chain architecture as shown in Fig. \ref{fig_DBP_architecture}(b). Unlike the star architecture, there is no central unit in the daisy chain architecture. Instead, the DUs are connected by unidirectional links, and an additional link connects the last and first DUs to form a ring. Only one DU outputs the symbol estimate and has a direct connection with the decoder.

% The star architecture requires a higher communication bandwidth between the CU and DUs, resulting in complex chip input/output interfaces at the CU. Nevertheless, the star architecture has less latency than the daisy chain architecture since the star architecture enables parallel processing. In contrast, the interface design in the daisy chain architecture is simple, low-cost, and easier to extend but often has higher latency.

In the DBP architectures, the received vector, channel matrix, and noise vector in Eq. \eqref{cen-model} are partitioned as $\mathbf{y}=[\mathbf{y}_{1}^{T},\mathbf{y}_{2}^{T},\ldots,\mathbf{y}_{C}^{T}]^{T}$, $\mathbf{H}=[\mathbf{H}_{1}^{T},\mathbf{H}_{2}^{T},\ldots,\mathbf{H}_{C}^{T}]^{T}$, and $\mathbf{n}=[\mathbf{n}_{1}^{T},\mathbf{n}_{2}^{T},\ldots,\mathbf{n}_{C}^{T}]^{T}$, respectively. Therefore, the local received signal $\mathbf{y}_{c} \in \mathbb{C}^{M_{\mathrm{c}}\times 1}$ at cluster $c$ is represented by
\begin{equation}\label{decen-model}
    \mathbf{y}_{c}=\mathbf{H}_{c} \mathbf{s}+\mathbf{n}_{c}, \quad c=1,2, \ldots, C,
\end{equation}
where $\mathbf{H}_c\in \mathbb{C}^{M_{c} \times K}$ and $\mathbf{n}_c\in \mathbb{C}^{M_{c}\times 1}$ denote the local channel matrix and the local noise vector with respect to cluster $c$, respectively. Here, $\mathbf{H}_c$ and $\mathbf{y}_c$ are assumed to be known only locally to the DU $c$ and are not directly exchanged within DUs. When $C = 1$, the DBP architecture degenerates into the traditional CBP architecture.

\subsubsection{Challenges for Decentralized LMMSE Algorithm Design in DBP}

Previous works on decentralized equalization for DBP architectures all assumed that the noise vector $\mathbf{n}$ is AWGN with a diagonal covariance matrix \cite{li2017decentralized,li2016decentralized,sanchez2020decentralized,li2019decentralized,jeon2019decentralized,wang2020expectation,jeon2017achievable,zhang2020decentralized,amiri2021uncoordinated,sanchez2019decentralized,zhang2021decentralized,kulkarni2021hardware}. This allows for the covariance matrix to be naturally decomposed into multiple diagonal submatrices that perfectly fit the distributed implementation of LMMSE equalization in DBP architectures. However, the noise at the BS is often colored due to the presence of interference signals from non-target UEs. In this scenario, the noise covariance matrix is non-diagonal. Moreover, The covariance matrix $\mathbf{R}$ is not perfectly known and can only be estimated by averaging the noise samples in $N$ pilot resource elements (REs) as follows \cite{barriac2004space}: \footnote{The noise covariance has been
shown to stay constant over a wide frequency interval \cite{barriac2004space}, and
thus can be estimated via averaging over frequency in practical wideband systems.}
\begin{equation}\label{equation:estimate_R}
    \hat{\mathbf{R}}=\frac{1}{N}\sum_{i=1}^{N}\mathbf{n}^i(\mathbf{n}^i)^H,
\end{equation}
where $N\gg K$ usually holds to ensure the accuracy of estimation, and $\mathbf{n}^i \in \mathbb{C}^{M\times 1}$ is the noise sample in the $i$-th pilot RE\footnote{All the theoretical derivations in the following are based on the statistical covariance $\mathbf{R}$. However, in practice, only $\hat{\mathbf{R}}$ can be obtained. Therefore, for simplicity of notation, we make no distinction between $\hat{\mathbf{R}}$ and $\mathbf{R}$ in the rest of this paper.}.

% \emph{Assumption 1 (Colored Noise):}
% We assume $\mathbf{n} \sim \mathcal{CN}(\mathbf{0},\mathbf{R})$ is a circularly symmetric complex Gaussian random vector, where $\mathbf{R}\triangleq \mathbb{E}\left[\mathbf{n}\mathbf{n}^{H}\right]$ is a \emph{non-diagonal} covariance matrix.

    Corresponding to the antennas clustering in the DBP architectures, the $i$-th noise sample $\mathbf{n}^i$ can be divided as $\mathbf{n}^i=[(\mathbf{n}_1^i)^{T},(\mathbf{n}_2^i)^{T},\ldots,(\mathbf{n}_C^i)^{T}]^{T}, \forall i$, where $\{\mathbf{n}^i_c\}_{i=1}^{N}$ are stored in cluster $c$ (illustrated with orange color in Fig. \ref{fig_colored_noise_challenging}). Similarly, the noise covariance matrix $\mathbf{R}$ can be regarded as a block matrix with $C \times C$ blocks, where the $(m, n)$-th block submatrix is denoted by $\mathbf{R}_{mn}=\mathbb{E}\left[\mathbf{n}_m\mathbf{n}_n^{H}\right]$. Accurate estimation of $\mathbf{R}$ is crucial for effective LMMSE equalization. However, only the diagonal blocks of $\mathbf{R}$ (denoted by $\hat{\mathbf{R}}_{cc}$ and illustrated in yellow in Fig. \ref{fig_colored_noise_challenging}) can be locally estimated by each cluster $c$ using $\hat{\mathbf{R}}_{cc}=(1/N)\sum_{i=1}^{N}\mathbf{n}^i_c(\mathbf{n}^i_c)^H$. The key step lies in accurately obtaining the off-diagonal blocks of $\mathbf{R}$ (i.e., $\mathbf{R}_{mn}, m \ne n$, and illustrated in blue in Fig. \ref{fig_colored_noise_challenging}). Note that the direct exchange of noise samples between DUs would result in high-dimensional data transfer with size $M\times N$, which is prohibited for massive MIMO settings. \emph{Thus, distributed covariance estimation with low communication bandwidth is still challenging, and traditional whitening noise methods cannot be applied.} As a result, decentralized computation of the LMMSE equalization matrix in \eqref{MMSE-solution} under stringent bandwidth constraints poses a significant challenge.

\begin{figure}[!tb]
    \centering
    \includegraphics[width=0.47\textwidth]{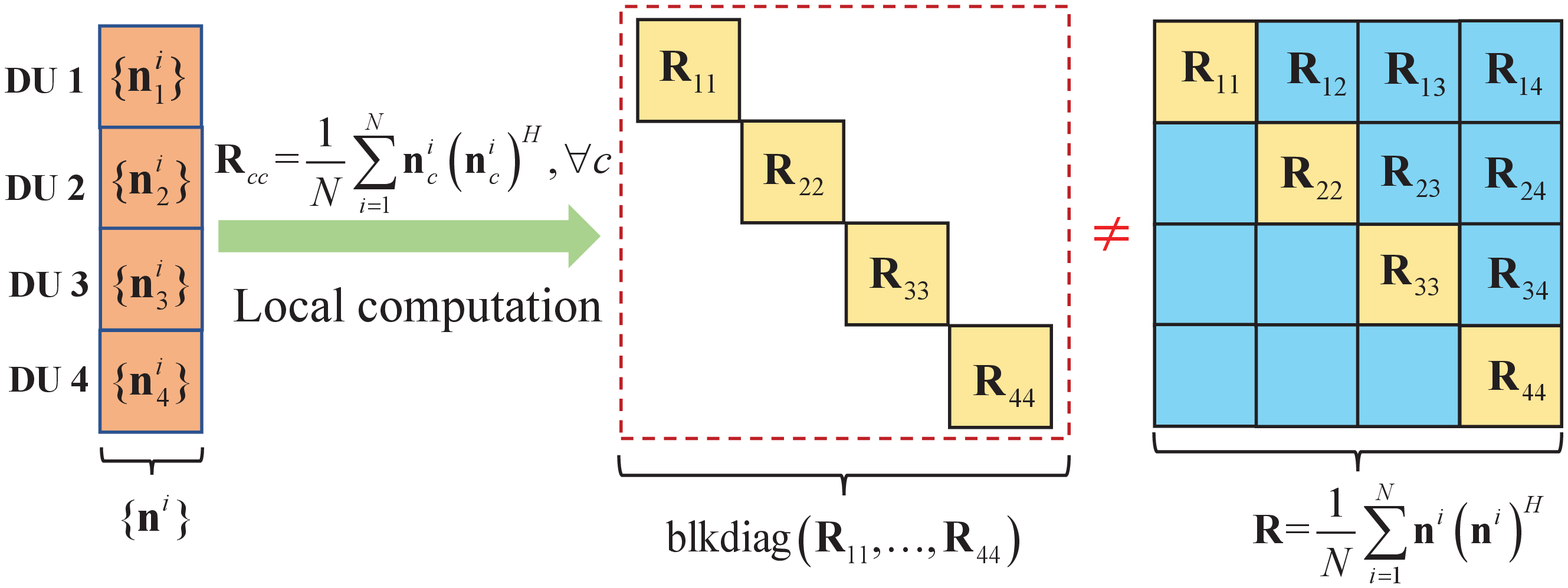}
    \caption{An illustrative example of the colored noise assumption in a DBP architecture with $C=4$.
   }
        \label{fig_colored_noise_challenging}
\end{figure}

In a word, LMMSE equalization in DBP architectures under colored noise remains a significant challenge, necessitating a completely new algorithmic design that accounts for limited communication bandwidth and low computational complexity.

\subsubsection{A Straightforward Solution}
We propose a straightforward solution to tackle this challenge, i.e., approximating $\mathbf{R}$ with a block diagonal matrix, denoted by $\mathbf{R}_{\text{B}}\triangleq\text{blkdiag} \left(\mathbf{R}_{11},\mathbf{R}_{22},\ldots,\mathbf{R}_{CC}\right)$, where all off-diagonal blocks are set to zero. This approach enables the approximation of the LMMSE equalization matrix in \eqref{MMSE-solution} as
\begin{equation}\label{block-approximate}
\begin{aligned}
&\left(\mathbf{H}^H\mathbf{R}_{\text{B}}^{-1}\mathbf{H}+\frac{1}{E_{s}}\mathbf{I}\right)^{-1}\mathbf{H}^H\mathbf{R}_{\text{B}}^{-1}\\
=&\left(\sum_{c=1}^{C}\mathbf{H}_{c}^{H}\mathbf{R}_{cc}^{-1}\mathbf{H}_{c}+\frac{1}{E_{s}}\mathbf{I}\right)^{-1}
    \left[
        \mathbf{H}_{1}^{H}\mathbf{R}_{11}^{-1},  \ldots  ,\mathbf{H}_{C}^{H}\mathbf{R}_{CC}^{-1}
    \right].
\end{aligned}
\end{equation}
It is clear that \eqref{block-approximate} can be implemented in a decentralized manner. Specifically, each DU $c=1,2,\ldots,C$ first computes $\mathbf{H}_{c}^{H}\mathbf{R}_{cc}^{-1}\mathbf{H}_{c}$ locally, which is then collected together to complete the summation and inversion operations. Finally, the matrix inverse result is broadcast to all the DUs, allowing each DU to compute its local equalization matrix. It is worth noting that only low-dimensional matrices with size $K\times K$ are transferred, thereby incurring very little communication bandwidth. This decentralized implementation to obtain the approximate equalization matrix in \eqref{block-approximate} is termed block diagonal approximate covariance MMSE (BDAC-MMSE) equalizer algorithm. Although such a simple approximation will lead to a performance loss, it can still serve as a good initial point for our proposed decentralized BCD-based equalizers in Section \ref{Sec_BCDMMSE} as well as a baseline algorithm.

\begin{remark} \emph{(Other Related Works in Cell-Free Systems):}
    Recent works \cite{helmersson2022uplink,shaik2021mmse} investigated the design of MMSE receivers for \emph{cell-free} massive MIMO systems in the presence of colored noise. However, their problem formulations differ significantly from ours. Specifically, they considered colored noise caused by channel estimation error and employed a Kalman filtering-based approach to demonstrate that a centralized LMMSE can be sequentially implemented using local noise covariance matrices. However, they assumed that the noise covariance matrix is a block diagonal matrix comprising multiple local covariance matrices, with each decentralized node having knowledge of its corresponding local covariance matrix. Consequently, the algorithms proposed in \cite{helmersson2022uplink,shaik2021mmse} for colored noise with block-diagonal covariance matrix are not applicable to our decentralized LMMSE equalization problem.
\end{remark}

\section{Dimensionality Reduction MMSE Equalization for Star DBP Architecture}\label{DRMMSE_section}
This section introduces the DR technique for the star DBP architecture and formulates the LMMSE problem under two types of compression matrices. Then, two DR-based decentralized equalization methods are designed. 
\vspace{-5pt}
\subsection{Dimensionality Reduction in DBP}
To reduce the bandwidth burden in DBP architectures, a straightforward idea is to reduce the dimension of local information while preserving equalization performance, which matches the concept of the DR technique  \cite{schizas2007distributed,song2005sensors}. Specifically, in the star DBP architecture, through a fat local compression matrix $\mathbf{Q}_c \in \mathbb{C}^{L_c \times M_c}$, where $L_c < M_c$, each DU parallelly transfers the compressed local received vector $\mathbf{Q}_c \mathbf{y}_c$, channel matrix $\mathbf{Q}_c \mathbf{H}_c$, and noise samples $\{\mathbf{Q}_c \mathbf{n}_{c}^{i}\}_{i=1}^{N}$ to the CU. Based on these compressed data, an LMMSE estimate of 
$\mathbf{s}$ is formed through an equalization matrix $\mathbf{W}$. Since $L_c < M_c$, the communication bandwidth can be significantly reduced. The rest of the paper assumes the local compression dimension $L_c=L$ for brevity.

The compression matrix can be an arbitrary fat matrix. For the star DBP architecture, we focus on the following two scenarios of compression matrices:
\begin{itemize}
	\item[1)] \emph{Superimposed Compression:} The compressed received signals from each DU are superimposed at the CU as	\begin{equation}\label{compress1}
    \tilde{\mathbf{y}}\left(\mathbf{Q}_1,\ldots,\mathbf{Q}_C\right) = \sum_{c=1}^{C}\mathbf{Q}_c\mathbf{y}_{c}= \left[
        \mathbf{Q}_{1},  \ldots  ,\mathbf{Q}_{C}
    \right]\mathbf{y}.
\end{equation}
 
	\item[2)] \emph{Concatenated Compression:} The CU concatenates the compressed data of individual DUs to form a vector
\begin{equation}\label{compress2}
\tilde{\mathbf{y}}\left(\mathbf{Q}_1,\ldots,\mathbf{Q}_C\right) = \text{blkdiag}(\mathbf{Q}_1,\ldots,\mathbf{Q}_C)\mathbf{y}.
\end{equation}
\end{itemize}

The dimension of $\tilde{\mathbf{y}}(\mathbf{Q}_1,\ldots,\mathbf{Q}_C)$ in the superimposed and concatenated scenarios are $L\times 1$ and $CL\times 1$, respectively. $\left[
        \mathbf{Q}_{1},  \ldots  ,\mathbf{Q}_{C}
    \right]$ and $\text{blkdiag}(\mathbf{Q}_1,\ldots,\mathbf{Q}_C)$ are global compression matrices. For both scenarios, we aim to design MSE optimal equalization matrices $\tilde{\mathbf{W}}$ and local compression matrices $\{\mathbf{Q}_c\}_{c=1}^{C}$, i.e., we seek
\begin{equation}\label{compress_problem}
    \min _{\tilde{\mathbf{W}},\{\mathbf{Q}_c\}_{c=1}^{C}}  \quad \mathbb{E}\left[\left\| \mathbf{s} - \tilde{\mathbf{W}}\tilde{\mathbf{y}}(\mathbf{Q}_1,\ldots,\mathbf{Q}_C) \right\|_2^{2}\right],
\end{equation}
where $\tilde{\mathbf{y}}(\mathbf{Q}_1,\ldots,\mathbf{Q}_C)$ is given by either \eqref{compress1} or \eqref{compress2}, depending on the operational scenario. In general, concatenating has a better performance but with higher complexity compared to superimposing. Although many articles have investigated problem \eqref{compress_problem}, they all assumed that the statistical properties of the signal and noise are available at a central node \cite{schizas2007distributed,song2005sensors}. However, the biggest challenge in this paper is that the non-diagonal noise covariance matrix can only be estimated by the noise samples, which are separately stored at each DU. Thus, the global optimization methods in \cite{song2005sensors} for problem \eqref{compress_problem} will inevitably lead to a heavy bandwidth burden. Consequently, we need to reconsider the problem \eqref{compress_problem} and find a reasonable \emph{approximate} solution with the trade-off between bandwidth and performance. Since the optimal $\tilde{\mathbf{W}}$ in problem \eqref{compress_problem} is given by the LMMSE solution after $\{\mathbf{Q}_c\}_{c=1}^{C}$ are obtained, we will focus on the design of compression matrices $\{\mathbf{Q}_c\}_{c=1}^{C}$.

A fundamental question is, under what conditions the compression is lossless? Here, lossless means no performance gap exists between performing LMMSE equalization with compressed and uncompressed information. The following theorem answers this question:

\begin{theorem}\label{compression_theorem}\!\!\!\emph{:}
Consider the system model in \eqref{cen-model}, i.e., $\mathbf{y}=\mathbf{H}\mathbf{s}+\mathbf{n}$. Denote $\mathbf{Q}\in \mathbb{C}^{L \times M}$ as the global compression matrix. The minimal compression dimension without performance loss is $\operatorname{rank}(\mathbf{H})=K$, and $\mathbf{P}\mathbf{H}^H\mathbf{R}^{-1}\in \mathbb{C}^{K \times M}$ for an arbitrary invertible matrix $\mathbf{P}\in \mathbb{C}^{K\times K}$ is a lossless compression matrix.
\end{theorem}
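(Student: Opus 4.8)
The plan is to reduce the compressed-data equalization problem back to an ordinary LMMSE problem and show the resulting error matches the uncompressed case exactly when $\mathbf{Q}$ retains all the ``useful'' information. First I would observe that, for a fixed global compression matrix $\mathbf{Q}$, the optimal equalizer $\tilde{\mathbf{W}}$ in \eqref{compress_problem} is the LMMSE receiver for the transformed observation $\mathbf{z}=\mathbf{Q}\mathbf{y}=\mathbf{Q}\mathbf{H}\mathbf{s}+\mathbf{Q}\mathbf{n}$, whose noise covariance is $\mathbf{Q}\mathbf{R}\mathbf{Q}^H$. Plugging into the standard formula \eqref{MMSE-solution}, the minimum MSE attained on the compressed data equals $\tr\big[(\tfrac{1}{E_s}\mathbf{I}+\mathbf{H}^H\mathbf{Q}^H(\mathbf{Q}\mathbf{R}\mathbf{Q}^H)^{-1}\mathbf{Q}\mathbf{H})^{-1}\big]$ (up to the usual constant), while the uncompressed minimum MSE is $\tr\big[(\tfrac{1}{E_s}\mathbf{I}+\mathbf{H}^H\mathbf{R}^{-1}\mathbf{H})^{-1}\big]$. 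So ``lossless'' is equivalent to the identity $\mathbf{H}^H\mathbf{Q}^H(\mathbf{Q}\mathbf{R}\mathbf{Q}^H)^{-1}\mathbf{Q}\mathbf{H}=\mathbf{H}^H\mathbf{R}^{-1}\mathbf{H}$.

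Next I would change variables to whiten the noise: write $\mathbf{R}=\mathbf{R}^{1/2}\mathbf{R}^{1/2}$, set $\bar{\mathbf{H}}=\mathbf{R}^{-1/2}\mathbf{H}$ and $\bar{\mathbf{Q}}=\mathbf{Q}\mathbf{R}^{1/2}$. The target identity becomes $\bar{\mathbf{H}}^H\bar{\mathbf{Q}}^H(\bar{\mathbf{Q}}\bar{\mathbf{Q}}^H)^{-1}\bar{\mathbf{Q}}\bar{\mathbf{H}}=\bar{\mathbf{H}}^H\bar{\mathbf{H}}$, i.e., $\bar{\mathbf{H}}^H(\mathbf{I}-\mathbf{\Pi})\bar{\mathbf{H}}=\mathbf{0}$, where $\mathbf{\Pi}=\bar{\mathbf{Q}}^H(\bar{\mathbf{Q}}\bar{\mathbf{Q}}^H)^{-1}\bar{\mathbf{Q}}$ is the orthogonal projector onto the row space of $\bar{\mathbf{Q}}$. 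Since $\mathbf{I}-\mathbf{\Pi}\succeq\mathbf{0}$, this holds iff $(\mathbf{I}-\mathbf{\Pi})\bar{\mathbf{H}}=\mathbf{0}$, i.e., iff every column of $\bar{\mathbf{H}}=\mathbf{R}^{-1/2}\mathbf{H}$ lies in the row space of $\bar{\mathbf{Q}}=\mathbf{Q}\mathbf{R}^{1/2}$, equivalently $R(\mathbf{R}^{-1/2}\mathbf{H})\subseteq R((\mathbf{Q}\mathbf{R}^{1/2})^H)=R(\mathbf{R}^{1/2}\mathbf{Q}^H)$. Because $\dim R(\mathbf{R}^{-1/2}\mathbf{H})=\operatorname{rank}(\mathbf{H})=K$ (as $\mathbf{R}$ is invertible and $\mathbf{H}$ has full column rank), this forces $L\ge K$, establishing minimality; and $L=K$ is achievable precisely when $R(\mathbf{R}^{1/2}\mathbf{Q}^H)=R(\mathbf{R}^{-1/2}\mathbf{H})$.

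Finally I would verify that the claimed family $\mathbf{Q}=\mathbf{P}\mathbf{H}^H\mathbf{R}^{-1}$ meets this condition. For such $\mathbf{Q}$, $\mathbf{R}^{1/2}\mathbf{Q}^H=\mathbf{R}^{1/2}\mathbf{R}^{-1}\mathbf{H}\mathbf{P}^H=\mathbf{R}^{-1/2}\mathbf{H}\mathbf{P}^H$, so $R(\mathbf{R}^{1/2}\mathbf{Q}^H)=R(\mathbf{R}^{-1/2}\mathbf{H}\mathbf{P}^H)=R(\mathbf{R}^{-1/2}\mathbf{H})$ because $\mathbf{P}$ is invertible; this is exactly the required equality, so the compression is lossless. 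I would also note that $\mathbf{Q}\mathbf{H}=\mathbf{P}\mathbf{H}^H\mathbf{R}^{-1}\mathbf{H}$ is $K\times K$ and invertible (since $\mathbf{H}^H\mathbf{R}^{-1}\mathbf{H}\succ\mathbf{0}$), which makes the downstream algebra clean, and one can even exhibit $\tilde{\mathbf{W}}$ explicitly and check $\tilde{\mathbf{W}}\mathbf{Q}=\mathbf{W}_{\text{MMSE}}$ directly as an alternative route.

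The main obstacle I anticipate is the step that turns the MSE-equality into the clean linear-algebraic condition $(\mathbf{I}-\mathbf{\Pi})\bar{\mathbf{H}}=\mathbf{0}$: one must argue carefully that the compressed Fisher-information-type matrix $\bar{\mathbf{H}}^H\mathbf{\Pi}\bar{\mathbf{H}}$ can never exceed $\bar{\mathbf{H}}^H\bar{\mathbf{H}}$ in the Loewner order (data-processing / projection inequality) and that equality of traces of the inverses then forces equality of the matrices themselves, which in turn forces $(\mathbf{I}-\mathbf{\Pi})\bar{\mathbf{H}}=\mathbf{0}$. Handling the rank/dimension bookkeeping to pin down $L=K$ as both necessary and sufficient is the other place where a little care is needed, but it follows once the projection characterization is in hand.
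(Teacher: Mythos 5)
Your proof is correct, and it is worth noting that it is genuinely more complete than what the paper itself provides: the paper does not prove Theorem~\ref{compression_theorem} at all, but instead cites \cite{song2005sensors} for the general result and only \emph{verifies} sufficiency for the particular choice $\mathbf{Q}=\mathbf{H}^H\mathbf{R}^{-1}$ by direct substitution into the compressed LMMSE formula (the computation in \eqref{adequate}), which says nothing about the minimality claim $L\ge K$. Your whitening-plus-projection argument --- reduce losslessness to $\bar{\mathbf{H}}^H(\mathbf{I}-\mathbf{\Pi})\bar{\mathbf{H}}=\mathbf{0}$ with $\mathbf{\Pi}$ the orthogonal projector onto $R(\mathbf{R}^{1/2}\mathbf{Q}^H)$, conclude $(\mathbf{I}-\mathbf{\Pi})\bar{\mathbf{H}}=\mathbf{0}$ from idempotence of $\mathbf{I}-\mathbf{\Pi}$, and read off both the necessity of $L\ge\operatorname{rank}(\mathbf{H})=K$ and the sufficiency of $\mathbf{Q}=\mathbf{P}\mathbf{H}^H\mathbf{R}^{-1}$ from the range-space inclusion --- is exactly the right machinery, and in fact it is the same projection device the authors deploy in Appendix~\ref{prove_drmmse_performance} to prove Proposition~\ref{pro_drmmse_performace} (compare \eqref{proof_of_11}). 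The two steps you flag as needing care are both handled correctly by standard facts: the Loewner inequality $\bar{\mathbf{H}}^H\mathbf{\Pi}\bar{\mathbf{H}}\preceq\bar{\mathbf{H}}^H\bar{\mathbf{H}}$ follows from $\mathbf{\Pi}\preceq\mathbf{I}$, and equality of traces of the inverses forces equality of the (ordered, positive definite) matrices because $\tr(\mathbf{C}-\mathbf{D})=0$ with $\mathbf{C}-\mathbf{D}\succeq\mathbf{0}$ implies $\mathbf{C}=\mathbf{D}$. The only cosmetic caveat is that $(\mathbf{Q}\mathbf{R}\mathbf{Q}^H)^{-1}$ presumes $\mathbf{Q}$ has full row rank; if it does not, replace the inverse by a pseudo-inverse, note that the effective dimension is $\operatorname{rank}(\mathbf{Q})\le L$, and the same argument applies. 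Net effect: your write-up buys a self-contained proof of both directions tailored to this system model, whereas the paper buys brevity by outsourcing the general statement to the sensor-network literature.
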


Theorem \ref{compression_theorem} has been proved in \cite{song2005sensors} but with a complicated form for general distributed sensor systems. Here, we have expressed it explicitly for our uplink massive MIMO systems. Following Theorem \ref{compression_theorem}, by setting $\mathbf{P}=\mathbf{I}$, $\mathbf{H}^H\mathbf{R}^{-1} \in \mathbb{C}^{K \times M}$ is a lossless compression matrix, which can be verified by the following derivation:
\begin{equation}
\label{adequate}
\begin{aligned}
        \bar{\mathbf{s}}_{\text{MMSE}} =& \left(\mathbf{H}^H\mathbf{Q}^{H}\left(\mathbf{Q}\mathbf{R}\mathbf{Q}^{H}\right)^{-1}\mathbf{Q}\mathbf{H}+\frac{1}{E_{s}}\mathbf{I}\right)^{-1}\\
        &\ \ \times \mathbf{H}^H\mathbf{Q}^{H}\left(\mathbf{Q}\mathbf{R}\mathbf{Q}^{H}\right)^{-1}\mathbf{Q}\mathbf{y}\\
        =&\left(\mathbf{H}^H\mathbf{R}^{-1}\mathbf{H}+\frac{1}{E_{s}}\mathbf{I}\right)^{-1}\mathbf{H}^H\mathbf{R}^{-1}\mathbf{y},
        \end{aligned}
\end{equation}
where the first equality provides the LMMSE estimate (cf. \eqref{MMSE-solution}) of $\mathbf{s}$ based on the compressed received vector $\mathbf{Qy}$, and the second equality is obtained by taking $\mathbf{Q}=\mathbf{H}^H\mathbf{R}^{-1}$.
The equivalence between \eqref{MMSE-solution} and \eqref{adequate} immediately shows that $\mathbf{H}^H\mathbf{R}^{-1}$ is a lossless compression matrix.

% \begin{remark}\!\!\!\emph{:}
% There is another perspective of $\mathbf{H}^H\mathbf{R}^{-1}$ being a lossless compression matrix. Firstly, left multiply both sides of $\mathbf{y}=\mathbf{Hs}+\mathbf{n}$ by $\mathbf{R}^{-1/2}$, which can be seen as an equivalent transformation that makes the noise white. Secondly, perform the MRC, i.e., left multiply $\mathbf{H}^H\mathbf{R}^{-1/2}$ on both sides of the obtained equality, which is known to be lossless since the MRC vector is a sufficient statistic for the transmitted symbol $\mathbf{s}$ \cite{paulraj2003introduction}. Therefore, combining these two steps, we find that $\mathbf{H}^H\mathbf{R}^{-1}=(\mathbf{H}^H\mathbf{R}^{-1/2})\mathbf{R}^{-1/2}$ is a lossless compression matrix.
% \end{remark}

However, the lossless compression matrix $\mathbf{H}^H\mathbf{R}^{-1}$ can not be utilized directly for the considered DBP architectures since $\mathbf{R}^{-1}$ is unavailable unless $\{\mathbf{n}_c^i\}_{i=1}^{N}$ at each DU is collected to the CU, which is unbearable due to bandwidth limitations. Thus we turn to a lossy compromise by taking $\mathbf{Q}_c = \mathbf{H}_c^H\mathbf{R}_{cc}^{-1}\in\mathbb{C}^{K\times M_c}$ as a local compression matrix at each DU $c$. The following subsections show that it will be a lossless local compression matrix when $\mathbf{R}=\mathbf{R}_{\text{B}}$. Simulation results also suggest that this approximation is effective. In the subsequent two subsections, we derive the compression matrices and corresponding equalization methods for different scenarios.

\vspace{-7pt}
\subsection{Superimposed DR-MMSE Equalization}\label{sDRMMSE_subsection}
We first consider the superimposed scenario. Adopting $\mathbf{Q}_c = \mathbf{H}_c^H\mathbf{R}_{cc}^{-1}$ as the local compression matrix at DU $c$, the superimposed compression matrix is given by \begin{equation}\label{superimpose_compression_matrix}
    \begin{aligned}
\check{\mathbf{Q}}&=[\mathbf{Q}_1, \mathbf{Q}_2, \ldots , \mathbf{Q}_C]  \\
&=
[\mathbf{H}_1^H\mathbf{R}_{11}^{-1}, \mathbf{H}_2^H\mathbf{R}_{22}^{-1}, \ldots , \mathbf{H}_C^H\mathbf{R}_{CC}^{-1}]\\
&=\mathbf{H}^H\mathbf{R}_{\text{B}}^{-1}
,
    \end{aligned}
\end{equation}
which is an approximation to the lossless compression matrix $\mathbf{H}^H\mathbf{R}^{-1}$.
Applying the compression to \eqref{cen-model}, we obtain the effective compressed channel model at the CU as
\begin{equation}\label{superimpose_effective_model}
    \begin{aligned}
\check{\mathbf{y}}= \check{\mathbf{H}}\mathbf{s}+\check{\mathbf{n}},
    \end{aligned}
\end{equation}
where
\begin{subequations} \label{superimpose_ynH}
\begin{align}
&\check{\mathbf{y}}=\check{\mathbf{Q}} \mathbf{y} = \sum_{c=1}^C \mathbf{Q}_c\mathbf{y}_c, \label{superimpose_y}\\ ~ &\check{\mathbf{H}}=\check{\mathbf{Q}} \mathbf{H} = \sum_{c=1}^C \mathbf{Q}_c\mathbf{H}_c , \label{superimpose_H}\\~
&\check{\mathbf{n}}=\check{\mathbf{Q}} \mathbf{n} =\sum_{c=1}^{C}\mathbf{Q}_c\mathbf{n}_{c}.  \label{superimpose_n}
\end{align}
\end{subequations}
The effective noise covariance matrix can be expressed by
    \begin{equation}\label{superimpose_R}
    \begin{aligned}
        \check{\mathbf{R}}
        =
        &\check{\mathbf{Q}}\mathbf{R}\check{\mathbf{Q}}^H \\
        = &\check{\mathbf{Q}} \left( \frac{1}{N}\sum_{i=1}^N \mathbf{n}^{i}\left(\mathbf{n}^i\right)^H \right)\check{\mathbf{Q}}^H \\
        = &\sum_{m=1}^{C}\sum_{l=1}^{C} \left( \mathbf{Q}_m \left( \frac{1}{N}\sum_{i=1}^N \mathbf{n}^{i}_m\left(\mathbf{n}^i_l\right)^H \right) \mathbf{Q}_l^H \right) \\
           = &\frac{1}{N}\sum_{m=1}^{C}\sum_{l=1}^{C}\sum_{i=1}^N \mathbf{Q}_m\mathbf{n}^i_m(\mathbf{Q}_l\mathbf{n}^i_l)^H.
           \end{aligned}
    \end{equation}
    
After the low-dimensional compressed information $\mathbf{Q}_c\mathbf{y}_c,\mathbf{Q}_c\mathbf{H}_c$, and $\{\mathbf{Q}_c\mathbf{n}^i_c\}_{i=1}^{N}$ are collected from all the DUs, the CU calculates the equalization matrix as follows:
\begin{equation}\label{sDR_MMSE_matrix}
    \mathbf{W}_{\text{sDR-MMSE}} = \left(
   \check{\mathbf{H}}^H\check{\mathbf{R}}^{-1}\check{\mathbf{H}} + \frac{1}{E_s}\mathbf{I}
    \right)^{-1}
    \check{\mathbf{H}}^H\check{\mathbf{R}}^{-1},
\end{equation}
and the estimated symbol is given by
\begin{equation}\label{sDR_MMSE_symbol}
      \hat{\mathbf{s}}_{\text{sDR-MMSE}}=\mathbf{W}_{\text{sDR-MMSE}}\check{\mathbf{y}}.
\end{equation}

The proposed algorithm, named superimposed dimensionality reduction (sDR)-MMSE, is summarized in Algorithm \ref{alg-sDR-MMSE}. In the preprocessing phase,  $\mathbf{Q}_c\mathbf{y}_c$, $\mathbf{Q}_c\mathbf{H}_c$ and $\{\mathbf{Q}_c\mathbf{n}_c^i\}_{i=1}^{N}$ are calculated locally at each DU and then transmitted to the CU. Later in the equalization phase, the compressed information is superimposed to obtain the LMMSE equalization matrix in \eqref{sDR_MMSE_matrix}. Finally, multiplying the compressed received signal provides the final estimate of $\mathbf{s}$ in \eqref{sDR_MMSE_symbol}. Fig. \ref{fig:drmmse-information} illustrates the information transfer in the proposed sDR-MMSE equalization. The data transfer size is independent of the number of BS antennas.

\begin{algorithm}[!tb]
    \renewcommand{\algorithmicrequire}{\textbf{Input:}}
    \renewcommand{\algorithmicensure}{\textbf{Output:}}
    \caption{The Proposed sDR-MMSE Equalization}  \label{alg-sDR-MMSE}
    \begin{algorithmic}[1]
    \REQUIRE $\mathbf{y}_{c},\mathbf{H}_{c}, \{\mathbf{n}^i_c\}_{i=1}^{N}, c=1,2,\ldots,C$, and $E_s$.
    \STATE \textit{\textbf{Decentralized preprocessing at each DU:}}  
    \STATE \textbf{for} $c=1$ to $C$ \textbf{do}
    \STATE \quad $\mathbf{Q}_c \gets \mathbf{H}_c^H\mathbf{R}_{cc}^{-1}$;
    \STATE \quad  Compute $\mathbf{Q}_c\mathbf{y}_c$, $\mathbf{Q}_c\mathbf{H}_c$, and $\{\mathbf{Q}_c\mathbf{n}_c^i\}_{i=1}^{N}$; // \emph{Send to CU}
    \STATE \textbf{end for}
    \STATE \textit{\textbf{Central processing at the CU:}}
   \STATE Compute $\check{\mathbf{y}}$ and $\check{\mathbf{H}}$ via \eqref{superimpose_ynH};
    \STATE Compute $\check{\mathbf{R}}$ via \eqref{superimpose_R};
    \STATE $\mathbf{W}_{\text{sDR-MMSE}}$ is given by \eqref{sDR_MMSE_matrix};
    \STATE $\hat{\mathbf{s}}_{\text{sDR-MMSE}}$ is given by \eqref{sDR_MMSE_symbol};

    \ENSURE $\mathbf{W}_{\text{sDR-MMSE}}$ and $\hat{\mathbf{s}}_{\text{sDR-MMSE}}$.
    \end{algorithmic}
\end{algorithm}

\begin{remark}\!\!\emph{:}
Since $\mathbf{H}^H\mathbf{R}^{-1}$ is a lossless compression matrix, the sDR-MMSE equalization is equivalent to the centralized LMMSE equalization when $\mathbf{R}=\mathbf{R}_{\text{B}}$. Moreover, the performance degradation caused by compression will be minor if the off-diagonal elements of $\mathbf{R}$ are relatively small compared to the diagonal part. Fortunately, noise covariance matrix $\mathbf{R}$ often tends to block diagonally dominant in practical scenarios. A rigorous theoretical analysis is omitted due to space limitations.
\end{remark}

% \begin{remark}\!\!\!\emph{:}
% Notably, when the CU performs the equalization, the DUs could begin preprocessing for the next instant, boosting parallelism and reducing latency.
% \end{remark}

% \begin{remark}\emph{(Application in the Daisy Chain Architecture):}
% We could also apply the proposed sDR-MMSE equalizer in the daisy chain architecture because the summation of the compressed information will not increase the amount of information exchange and computational complexity. However, the latency will increase due to the serial nature of the daisy chain architecture.
% \end{remark}

\begin{figure}[!tb]
    \centering
    \includegraphics[width=0.3\textwidth]{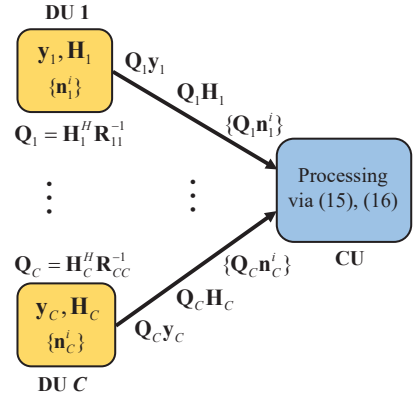}
    \caption{The communication and computation operations during the sDR-MMSE equalization process.}
    \label{fig:drmmse-information}
\end{figure}

\subsection{Concatenated DR-MMSE Equalization}\label{cDRMMSE_subsection}
The proposed sDR-MMSE equalization results in information loss due to the superimposition of information from each DU. A potential approach to improve performance is to concatenate compressed information instead of superimposing, although this comes at the cost of increased complexity. Therefore, we consider scenario 2) of the compression matrix in \eqref{compress2}.

We still adopt the local compression matrix as $\mathbf{Q}_c=\mathbf{H}_c^H\mathbf{R}_{cc}^{-1}$ at the $c$-th DU, but different from \eqref{superimpose_compression_matrix}, the concatenated compression matrix is given by \begin{equation}\label{concatenate_compression_matrix}
    \begin{aligned}
\tilde{\mathbf{Q}}&= \text{blkdiag}
\left(\mathbf{Q}_1, \mathbf{Q}_2, \ldots , \mathbf{Q}_C\right)\\
&=\text{blkdiag}
\left(\mathbf{H}_1^H\mathbf{R}_{11}^{-1}, \mathbf{H}_2^H\mathbf{R}_{22}^{-1}, \ldots , \mathbf{H}_C^H\mathbf{R}_{CC}^{-1}\right).
    \end{aligned}
\end{equation}
Thus, the effective channel model at the CU is given by
\begin{equation}\label{concatenate_effective_model}
    \begin{aligned}
\tilde{\mathbf{y}}= \tilde{\mathbf{H}}\mathbf{s}+\tilde{\mathbf{n}},
    \end{aligned}
\end{equation}
where 
\begin{equation}\label{concatenate_ynH}
        \tilde{\mathbf{y}}
        =
        \left[\begin{array}{c}
                \mathbf{Q}_1\mathbf{y}_1 \\
                \vdots                     \\
                \mathbf{Q}_C\mathbf{y}_C \\
            \end{array}\right], ~
           \tilde{\mathbf{H}}
        =
        \left[\begin{array}{c}
                \mathbf{Q}_1\mathbf{H}_1 \\
                \vdots                     \\
                \mathbf{Q}_C\mathbf{H}_C \\
            \end{array}\right], ~
           \tilde{\mathbf{n}}
        =
        \left[\begin{array}{c}
                \mathbf{Q}_1\mathbf{n}_1 \\
                \vdots                     \\
                \mathbf{Q}_C\mathbf{n}_C \\
            \end{array}\right].
    \end{equation}
The effective noise covariance matrix can be expressed by
    \begin{equation}\label{concatenate_R}
        \tilde{\mathbf{R}}
        =
        \left[\begin{array}{ccc}
                \mathbf{Q}_1\mathbf{R}_{11}\mathbf{Q}^H_1 & \ldots & \mathbf{Q}_1\mathbf{R}_{1C}\mathbf{Q}^H_C \\
                \vdots                                    & \ddots & \vdots                                    \\
                \mathbf{Q}_C\mathbf{R}_{C1}\mathbf{Q}^H_1 & \ldots & \mathbf{Q}_C\mathbf{R}_{CC}\mathbf{Q}^H_C \\
            \end{array}\right],
    \end{equation}
where 
    \begin{equation}\label{concatenate_Rmn}
    \begin{aligned}
        \mathbf{Q}_m\mathbf{R}_{ml}\mathbf{Q}^H_l
        =
        &\mathbf{Q}_m\left( \frac{1}{N}\sum_{i=1}^N \mathbf{n}^{i}_m\left(\mathbf{n}^i_l\right)^H \right)\mathbf{Q}^H_l \\
           = &\frac{1}{N}\sum_{i=1}^N \mathbf{Q}_m\mathbf{n}^i_m\left(\mathbf{Q}_l\mathbf{n}^i_l\right)^H.
           \end{aligned}
    \end{equation}
Consequently, after collecting the compressed information $\mathbf{Q}_c\mathbf{y}_c,\mathbf{Q}_c\mathbf{H}_c$, and $\{\mathbf{Q}_c\mathbf{n}^i_c\}_{i=1}^{N}$ from all the DUs, the CU calculates the equalization matrix as follows:
\begin{equation}
\label{cDR_MMSE_matrix}
    \mathbf{W}_{\text{cDR-MMSE}} = \left(
   \tilde{\mathbf{H}}^H\tilde{\mathbf{R}}^{-1}\tilde{\mathbf{H}} + \frac{1}{E_s}\mathbf{I}
    \right)^{-1}
    \tilde{\mathbf{H}}^H\tilde{\mathbf{R}}^{-1},
\end{equation}
and the estimated symbol is given by
\begin{equation}\label{cDR_MMSE_symbol}
      \hat{\mathbf{s}}_{\text{cDR-MMSE}}=\mathbf{W}_{\text{cDR-MMSE}}\tilde{\mathbf{y}}.
\end{equation}

The resulting algorithm is called concatenated dimensionality reduction (cDR)-MMSE equalization and is summarized in Algorithm \ref{alg-cDR-MMSE}. The preprocessing stage of cDR-MMSE equalization is the same as that of sDR-MMSE equalization, i.e., $\mathbf{Q}_c\mathbf{y}_c$, $\mathbf{Q}_c\mathbf{H}_c$ and $\{\mathbf{Q}_c\mathbf{n}_c^i\}_{i=1}^{N}$ are calculated locally at each DU and then sent to the CU. Whereas at the CU, the compressed information is concatenated to obtain the LMMSE equalization matrix in \eqref{cDR_MMSE_matrix} instead of being superimposed. The information transfer of the cDR-MMSE equalization is identical to that of the sDR-MMSE equalization illustrated in Fig. \ref{fig:drmmse-information}, except that the central processing at the CU should be replaced by \eqref{cDR_MMSE_matrix} and \eqref{cDR_MMSE_symbol}.

The cDR-MMSE equalizer outperforms the sDR-MMSE equalizer due to the concatenation operation. Proposition \ref{pro_drmmse_performace}  provides a rigorous analysis of this performance improvement. However, the cDR-MMSE equalizer has a higher computational complexity due to the high-dimensional matrix inversion $\tilde{\mathbf{R}}^{-1}$ at the CU.

\begin{proposition}\!\!\!\!\emph{:}\label{pro_drmmse_performace}
The MSE matrix $\mathbf{E}_{\text{sDR}}\succeq \mathbf{E}_{\text{cDR}}$, where $\mathbf{E}_{\text{sDR}}\triangleq\mathbb{E}[\left( \hat{\mathbf{s}}_{\text{sDR-MMSE}}-\mathbf{s} \right)\left( \hat{\mathbf{s}}_{\text{sDR-MMSE}}-\mathbf{s} \right)^{H}]$ and $\mathbf{E}_{\text{cDR}}\triangleq\mathbb{E}[\left( \hat{\mathbf{s}}_{\text{cDR-MMSE}}-\mathbf{s} \right)\left( \hat{\mathbf{s}}_{\text{cDR-MMSE}}-\mathbf{s} \right)^{H}]$. Moreover, we have $\mathbb{E}\left[\left\| \hat{\mathbf{s}}_{\text{sDR-MMSE}}-\mathbf{s} \right\|_2^{2}\right]\geq \mathbb{E}\left[\left\| \hat{\mathbf{s}}_{\text{cDR-MMSE}}-\mathbf{s} \right\|_2^{2}\right]$.
\end{proposition}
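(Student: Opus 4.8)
The plan is to establish the matrix inequality $\mathbf{E}_{\text{sDR}} \succeq \mathbf{E}_{\text{cDR}}$ by exhibiting the superimposed compression as a further linear compression applied on top of the concatenated one, and then invoke the standard fact that applying an LMMSE estimator after an additional linear transformation cannot reduce the error covariance. Concretely, observe that $\check{\mathbf{y}} = \sum_{c=1}^C \mathbf{Q}_c \mathbf{y}_c = [\mathbf{I}_K, \mathbf{I}_K, \ldots, \mathbf{I}_K]\, \tilde{\mathbf{y}}$, i.e., $\check{\mathbf{y}} = \mathbf{B}\tilde{\mathbf{y}}$ where $\mathbf{B} \triangleq [\mathbf{I}_K, \ldots, \mathbf{I}_K] \in \mathbb{C}^{K \times CK}$. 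Since $\mathbf{W}_{\text{sDR-MMSE}}\check{\mathbf{y}}$ is, by construction (cf. \eqref{sDR_MMSE_matrix}), the LMMSE estimate of $\mathbf{s}$ from $\check{\mathbf{y}} = \mathbf{B}\tilde{\mathbf{y}}$, while $\mathbf{W}_{\text{cDR-MMSE}}\tilde{\mathbf{y}}$ is the LMMSE estimate of $\mathbf{s}$ from the ``richer'' observation $\tilde{\mathbf{y}}$, the data-processing property of LMMSE estimation yields $\mathbf{E}_{\text{sDR}} \succeq \mathbf{E}_{\text{cDR}}$.

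**The key steps, in order, are:** (i) verify the identity $\check{\mathbf{y}} = \mathbf{B}\tilde{\mathbf{y}}$, and correspondingly $\check{\mathbf{H}} = \mathbf{B}\tilde{\mathbf{H}}$, $\check{\mathbf{R}} = \mathbf{B}\tilde{\mathbf{R}}\mathbf{B}^H$, so that the sDR effective model is literally the cDR effective model passed through $\mathbf{B}$; (ii) recall the general lemma: if $\hat{\mathbf{s}}_1$ is the LMMSE estimate of $\mathbf{s}$ from an observation $\mathbf{z}$ and $\hat{\mathbf{s}}_2$ is the LMMSE estimate from $\mathbf{B}\mathbf{z}$, then $\mathbb{E}[(\hat{\mathbf{s}}_2 - \mathbf{s})(\hat{\mathbf{s}}_2 - \mathbf{s})^H] \succeq \mathbb{E}[(\hat{\mathbf{s}}_1 - \mathbf{s})(\hat{\mathbf{s}}_1 - \mathbf{s})^H]$, because the linear class generated by $\mathbf{B}\mathbf{z}$ is contained in that generated by $\mathbf{z}$ and LMMSE minimizes the error covariance in the Loewner order over that class; (iii) apply the lemma with $\mathbf{z} = \tilde{\mathbf{y}}$, noting that the cross-moments needed to define both LMMSE estimators are consistent because $\check{\mathbf{R}}$ and $\check{\mathbf{H}}$ are obtained from $\tilde{\mathbf{R}}, \tilde{\mathbf{H}}$ via the same $\mathbf{B}$ used in step (i) — so the two equalizers in \eqref{sDR_MMSE_matrix} and \eqref{cDR_MMSE_matrix} are exactly the two LMMSE estimators in the lemma; (iv) take the trace of the matrix inequality to obtain the scalar MSE inequality $\mathbb{E}[\|\hat{\mathbf{s}}_{\text{sDR-MMSE}} - \mathbf{s}\|_2^2] \geq \mathbb{E}[\|\hat{\mathbf{s}}_{\text{cDR-MMSE}} - \mathbf{s}\|_2^2]$, using that the trace is monotone with respect to $\succeq$.

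**The main obstacle** I anticipate is subtle rather than computational: one must be careful that both $\mathbf{W}_{\text{sDR-MMSE}}$ and $\mathbf{W}_{\text{cDR-MMSE}}$ are genuinely LMMSE-optimal \emph{with respect to the same underlying joint distribution of $(\mathbf{s},\mathbf{y})$}, so that the nesting argument is valid. This hinges on checking that $\check{\mathbf{R}} = \mathbf{B}\tilde{\mathbf{R}}\mathbf{B}^H$ and $\check{\mathbf{H}} = \mathbf{B}\tilde{\mathbf{H}}$ hold \emph{exactly} (not just approximately), which follows directly from \eqref{superimpose_R}, \eqref{concatenate_R}–\eqref{concatenate_Rmn} since both are built from the same compressed noise samples $\mathbf{Q}_c\mathbf{n}_c^i$; and that the regularization term $\tfrac{1}{E_s}\mathbf{I}$ is the same in both formulas, which reflects the common prior $\mathbb{E}[\mathbf{s}\mathbf{s}^H] = E_s\mathbf{I}$. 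A clean way to dispatch this is to write the LMMSE error covariance in information form, $\mathbf{E} = (\tfrac{1}{E_s}\mathbf{I} + \mathbf{H}_{\text{eff}}^H \mathbf{R}_{\text{eff}}^{-1} \mathbf{H}_{\text{eff}})^{-1}$, and show directly that $\check{\mathbf{H}}^H\check{\mathbf{R}}^{-1}\check{\mathbf{H}} = \tilde{\mathbf{H}}^H\mathbf{B}^H(\mathbf{B}\tilde{\mathbf{R}}\mathbf{B}^H)^{-1}\mathbf{B}\tilde{\mathbf{H}} \preceq \tilde{\mathbf{H}}^H\tilde{\mathbf{R}}^{-1}\tilde{\mathbf{H}}$; the last step is a standard Loewner-order inequality for the ``compressed information matrix'' $\mathbf{B}^H(\mathbf{B}\mathbf{R}\mathbf{B}^H)^{-1}\mathbf{B} \preceq \mathbf{R}^{-1}$, after which inverting (antitone on positive definite matrices) and taking traces finishes both claims.
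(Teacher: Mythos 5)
Your proposal is correct and, once you unpack the ``data-processing'' lemma via the information form $\mathbf{E}=(\tfrac{1}{E_s}\mathbf{I}+\mathbf{H}_{\mathrm{eff}}^H\mathbf{R}_{\mathrm{eff}}^{-1}\mathbf{H}_{\mathrm{eff}})^{-1}$ and the Loewner inequality $\mathbf{B}^H(\mathbf{B}\tilde{\mathbf{R}}\mathbf{B}^H)^{-1}\mathbf{B}\preceq\tilde{\mathbf{R}}^{-1}$, it is essentially the paper's own argument: the paper likewise applies Woodbury to reach $\mathbf{E}^{-1}=\tfrac{1}{E_s}\mathbf{I}+\mathbf{H}^H\mathbf{Q}^H(\mathbf{Q}\mathbf{R}\mathbf{Q}^H)^{-1}\mathbf{Q}\mathbf{H}$ and then compares the two orthogonal projections $\mathbf{R}^{1/2}\mathbf{Q}^H(\mathbf{Q}\mathbf{R}\mathbf{Q}^H)^{-1}\mathbf{Q}\mathbf{R}^{1/2}$ using the range inclusion $R(\mathbf{R}^{1/2}\tilde{\mathbf{Q}}^H)\supseteq R(\mathbf{R}^{1/2}\check{\mathbf{Q}}^H)$, which is precisely your factorization $\check{\mathbf{Q}}=\mathbf{B}\tilde{\mathbf{Q}}$ made explicit. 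Your version works at the level of the $CK$-dimensional concatenated observation rather than the raw $M$-dimensional one, but the underlying inequality is the same, and your explicit $\mathbf{B}$ actually justifies the inclusion the paper only asserts ``holds obviously.''
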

\vspace{4pt}
\begin{proof}
See Appendix \ref{prove_drmmse_performance}.
\end{proof}

Here, $\mathbf{A}\succeq \mathbf{0}$ is a generalized inequality meaning $\mathbf{A}$ is a positive semidefinite matrix. Proposition \ref{pro_drmmse_performace} shows that the MSE of the cDR-MMSE equalizer is  no larger than that of the sDR-MMSE equalizer.

% As stated in the previous subsection, the proposed sDR-MMSE equalization could also be applied in the daisy chain architecture. But the cDR-MMSE equalization is only suitable for the star architecture since the bandwidth burden of concatenating would be unaffordable in the daisy chain architecture.

\begin{remark}\emph{(Application in the Daisy Chain Architecture):}
The sDR-MMSE equalizer can also be utilized in the daisy chain architecture since the summation operation will not increase communication bandwidth and computational complexity.  However, the sequential nature of the daisy chain architecture leads to increased latency.  In contrast, the cDR-MMSE equalizer is exclusively compatible with the star architecture since the concatenation operation results in a bandwidth burden as the information dimension accumulates in the daisy chain architecture.
\end{remark}

\begin{algorithm}[!tb]
    \renewcommand{\algorithmicrequire}{\textbf{Input:}}
    \renewcommand{\algorithmicensure}{\textbf{Output:}}
    \caption{The Proposed cDR-MMSE Equalization}  \label{alg-cDR-MMSE}
    \begin{algorithmic}[1]
    \REQUIRE $\mathbf{y}_{c},\mathbf{H}_{c}, \{\mathbf{n}^i_c\}_{i=1}^{N}, c=1,2,\ldots,C$, and $E_s$.
    \STATE \textit{\textbf{Decentralized preprocessing at each DU:}}
    \STATE \textbf{for} $c=1$ to $C$ \textbf{do}
    \STATE \quad $\mathbf{Q}_c \gets \mathbf{H}_c^H\mathbf{R}_{cc}^{-1}$;
    \STATE \quad  Compute $\mathbf{Q}_c\mathbf{y}_c$, $\mathbf{Q}_c\mathbf{H}_c$, and $\{\mathbf{Q}_c\mathbf{n}_c^i\}_{i=1}^{N}$; // \emph{Send to CU}
    \STATE \textbf{end for}
    \STATE \textit{\textbf{Central processing at the CU:}}
   \STATE  Compute $\check{\mathbf{y}}$ and $\check{\mathbf{H}}$ via  \eqref{concatenate_ynH};
    \STATE Compute $\check{\mathbf{R}}$ via \eqref{concatenate_R};
    \STATE $\mathbf{W}_{\text{cDR-MMSE}}$ is given by \eqref{cDR_MMSE_matrix};
    \STATE $\hat{\mathbf{s}}_{\text{cDR-MMSE}}$ is given by \eqref{cDR_MMSE_symbol};

    \ENSURE $\mathbf{W}_{\text{cDR-MMSE}}$ and $\hat{\mathbf{s}}_{\text{cDR-MMSE}}$.
    \end{algorithmic}
\end{algorithm}

\section{BCD-Based LMMSE Equalization for the Daisy Chain Architecture}\label{Sec_BCDMMSE}
In this section, we propose an optimal BCD-based LMMSE equalizer for the daisy chain architecture. We also design a decentralized low-rank decomposition method for the noise covariance matrix to further reduce communication bandwidth by integrating it into the BCD-based method.
\subsection{BCD-Based LMMSE Equalization}
The DR-based LMMSE equalizers can be viewed as a decentralized implementation of the closed-form expression of the equalization matrix in \eqref{MMSE-solution}. Although the sDR-MMSE can be extended to the daisy chain architecture, the DR-based LMMSE equalizers are only suboptimal solutions to the original LMMSE problem \eqref{MMSE-model}.
Therefore, we aim to design an optimal decentralized LMMSE equalizer for the daisy chain architecture by reconsidering problem \eqref{MMSE-model}. By taking covariance estimation into account, problem \eqref{MMSE-model} can be written equivalently as:
\begin{equation}\label{eq_MMSE_problem_sample}
    \min _{\mathbf{W} } \quad \frac{1}{N}\sum_{i=1}^{N}\mathbb{E}\left[\left\| \mathbf{W}\mathbf{H}\mathbf{s}+\mathbf{W}\mathbf{n}^i -\mathbf{s} \right\|_2^{2}\right].
\end{equation}
The objective function of the above problem has only one random variable $\mathbf{s}$, and the other variables are deterministic. It should be noted that $\mathbf{n}^i$ is stored separately at each DU. Thus we partition the equalization matrix $\mathbf{W}$ into block matrix as $\mathbf{W}=[\mathbf{W}_{1},\mathbf{W}_{2},\ldots,\mathbf{W}_{C}]$ with $\mathbf{W}_{c}\in \mathbb{C}^{K\times M_c}$. Then problem \eqref{eq_MMSE_problem_sample} can be rewritten as
\begin{equation}\label{eq_MMSE_problem_sample_DBP}
\begin{aligned}
    \min _{\mathbf{W}} \quad \frac{1}{N}\sum_{i=1}^{N}\mathbb{E}&\Bigg[\Big\| \mathbf{W}_c\mathbf{H}_c\mathbf{s}+\sum_{j=1,j \neq c}^{C}\mathbf{W}_{j}\mathbf{H}_{j}\mathbf{s}\\
    &+\mathbf{W}_c\mathbf{n}_c^i+ \sum_{j=1,j \neq c}^{C}\mathbf{W}_{j}\mathbf{n}_{j}^i-\mathbf{s} \Big\|_2^{2}\Bigg].
    \end{aligned}
\end{equation}

We propose to use the BCD method \cite{bertsekas1999nonlinear} to optimize $\{\mathbf{W}_{c}\}$, which is suitable for the decentralized daisy chain architecture. In addition, $\mathbf{W}_{c}\in \mathbb{C}^{K \times M_c}$ can be seen as a dimensionality reduction matrix that decreases the dimension of the intermediate variables. Specifically, the objective function in problem \eqref{eq_MMSE_problem_sample_DBP} is minimized with respect to one block variable $\mathbf{W}_{c}$ while fixing $\mathbf{W}_{j},j\neq c$ for $c=1,2,\ldots,C$ sequentially. The update of $\mathbf{W}_c$ is given by the following proposition.

\begin{proposition}\!\!\emph{:}\label{thm_BCD_solution}
 While fixing $\mathbf{W}_{j},j\neq c$ in problem \eqref{eq_MMSE_problem_sample_DBP}, the optimal solution with respect to $\mathbf{W}_c$ in closed form is given by \eqref{BCDMMSE-solution} at the bottom of this page.
\end{proposition}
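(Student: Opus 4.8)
The plan is to recognize the single-block subproblem as an unconstrained strictly convex quadratic program in $\mathbf{W}_c$ and solve it through its first-order optimality condition. First I would carry out the expectation in \eqref{eq_MMSE_problem_sample_DBP}. Since $\mathbf{s}$ is the only random quantity, with $\mathbb{E}[\mathbf{s}]=\mathbf{0}$ and $\mathbb{E}[\mathbf{s}\mathbf{s}^H]=E_s\mathbf{I}$, while each stored sample $\mathbf{n}^i$ is deterministic, the cross term between the signal and noise parts vanishes and every summand reduces to $E_s\|\mathbf{W}\mathbf{H}-\mathbf{I}\|_F^2+\|\mathbf{W}\mathbf{n}^i\|_2^2$. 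Averaging over $i$ and using $\mathbf{R}=\frac{1}{N}\sum_{i=1}^{N}\mathbf{n}^i(\mathbf{n}^i)^H$ turns the objective into the deterministic function $J(\mathbf{W})=E_s\|\mathbf{W}\mathbf{H}-\mathbf{I}\|_F^2+\tr(\mathbf{W}\mathbf{R}\mathbf{W}^H)$, i.e., exactly the LMMSE cost written in terms of the (estimated) covariance.

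Next I would substitute the block partitions $\mathbf{W}=[\mathbf{W}_1,\ldots,\mathbf{W}_C]$, $\mathbf{H}=[\mathbf{H}_1^T,\ldots,\mathbf{H}_C^T]^T$, and $\mathbf{R}=[\mathbf{R}_{mn}]_{m,n=1}^{C}$, and collect all terms of $J$ depending on $\mathbf{W}_c$ while treating $\{\mathbf{W}_j\}_{j\neq c}$ as constants. Writing $\mathbf{W}\mathbf{H}-\mathbf{I}=\mathbf{W}_c\mathbf{H}_c+\mathbf{A}_c$ with the fixed offset $\mathbf{A}_c\triangleq\sum_{j\neq c}\mathbf{W}_j\mathbf{H}_j-\mathbf{I}$, and expanding $\tr(\mathbf{W}\mathbf{R}\mathbf{W}^H)=\sum_{m,n}\mathbf{W}_m\mathbf{R}_{mn}\mathbf{W}_n^H$, the $\mathbf{W}_c$-dependent part becomes $g(\mathbf{W}_c)=\tr(\mathbf{W}_c\mathbf{M}_c\mathbf{W}_c^H)+2\operatorname{Re}\tr(\mathbf{W}_c\mathbf{D}_c)+\mathrm{const}$, where $\mathbf{M}_c=E_s\mathbf{H}_c\mathbf{H}_c^H+\mathbf{R}_{cc}$ is Hermitian and the linear coefficient $\mathbf{D}_c=E_s\mathbf{H}_c\mathbf{A}_c^H+\sum_{n\neq c}\mathbf{R}_{cn}\mathbf{W}_n^H$ gathers the contribution of the fixed blocks (here I use $\mathbf{R}_{mc}=\mathbf{R}_{cm}^H$ to merge the two off-diagonal sums into one $\operatorname{Re}\{\cdot\}$ term). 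Applying Wirtinger calculus (treating $\mathbf{W}_c$ and $\mathbf{W}_c^*$ as independent), $\nabla_{\mathbf{W}_c^*}g=\mathbf{W}_c\mathbf{M}_c+\mathbf{D}_c^H$, so setting it to zero gives the normal equation $\mathbf{W}_c\mathbf{M}_c=-\mathbf{D}_c^H$. Substituting back the definitions of $\mathbf{A}_c$ and $\mathbf{D}_c$ and again using $\mathbf{R}_{cn}^H=\mathbf{R}_{nc}$ rearranges $-\mathbf{D}_c^H$ into $E_s\mathbf{H}_c^H-\sum_{j\neq c}\mathbf{W}_j(E_s\mathbf{H}_j\mathbf{H}_c^H+\mathbf{R}_{jc})$, which, after right-multiplication by $\mathbf{M}_c^{-1}$, is precisely \eqref{BCDMMSE-solution}.

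To certify that this stationary point is the unique global minimizer (so the claimed optimum is well defined), I would observe that $\mathbf{M}_c=E_s\mathbf{H}_c\mathbf{H}_c^H+\mathbf{R}_{cc}$ is positive definite: $\mathbf{R}_{cc}=\frac{1}{N}\sum_i\mathbf{n}_c^i(\mathbf{n}_c^i)^H\succeq\mathbf{0}$ and is in fact strictly positive definite once the pilot length satisfies $N\ge M_c$ (consistent with the paper's standing assumption $N\gg K$), while $E_s\mathbf{H}_c\mathbf{H}_c^H\succeq\mathbf{0}$. Hence $g$ is strictly convex and coercive in $\mathbf{W}_c$, the first-order condition is both necessary and sufficient, and $\mathbf{M}_c^{-1}$ exists, so \eqref{BCDMMSE-solution} is the unique minimizer.

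I expect the principal difficulty to be bookkeeping rather than anything conceptual: one must track every term containing $\mathbf{W}_c$ — both from the quadratic signal term $\|\mathbf{W}\mathbf{H}-\mathbf{I}\|_F^2$ and from the $C\times C$ block expansion of $\tr(\mathbf{W}\mathbf{R}\mathbf{W}^H)$ — and keep the conjugate transposes, the $\operatorname{Re}\{\cdot\}$ factors, and the Hermitian symmetry $\mathbf{R}_{mn}=\mathbf{R}_{nm}^H$ consistent so that the Wirtinger gradient delivers exactly $\mathbf{W}_c\mathbf{M}_c=-\mathbf{D}_c^H$. A secondary point worth an explicit line is the invertibility of $\mathbf{M}_c$, which is where the pilot-length assumption enters; if one prefers to avoid genericity one can simply invoke a pseudoinverse, which changes nothing in the argument.
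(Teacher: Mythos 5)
Your proposal is correct and follows essentially the same route as the paper's proof: evaluate the expectation over $\mathbf{s}$ using $\mathbb{E}[\mathbf{s}]=\mathbf{0}$ and $\mathbb{E}[\mathbf{s}\mathbf{s}^H]=E_s\mathbf{I}$, isolate the quadratic-plus-linear dependence on $\mathbf{W}_c$, and set the (Wirtinger) gradient to zero, which yields \eqref{BCDMMSE-solution}. Your added remarks on the positive definiteness of $E_s\mathbf{H}_c\mathbf{H}_c^H+\mathbf{R}_{cc}$ and the resulting uniqueness of the minimizer are a small but sound refinement that the paper leaves implicit.
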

\begin{proof}
See Appendix \ref{pro_bcd_solution}.
\end{proof}

\begin{figure*}[hb] %hb代表放在文章底部，%ht为放在文章顶部 
 	\centering
 	 	 	 	\hrulefill	\begin{equation}\begin{aligned}\label{BCDMMSE-solution}
        \mathbf{W}_{c}^{\ast}=  \Bigg(E_s\left(\mathbf{I}_{K}-\sum_{j=1,j \neq c}^{C}\mathbf{W}_{j}\mathbf{H}_{j}\right)\mathbf{H}_{c}^{H}-\frac{1}{N}\sum_{i=1}^{N}&\sum_{j=1,j \neq c}^{C}\mathbf{W}_{j}\mathbf{n}^i_j(\mathbf{n}^i_c)^H\Bigg)\times\left(E_s\mathbf{H}_{c}\mathbf{H}_{c}^{H}+\frac{1}{N}\sum_{i=1}^{N}\mathbf{n}^i_c(\mathbf{n}^i_c)^H\right)^{-1}.
\end{aligned}\end{equation}
  	\vspace{-0.4cm}
 \end{figure*}
 
We find that only $\sum_{j \neq c}\mathbf{W}_{j}\mathbf{H}_{j}$ and $\{\sum_{j \neq c}\mathbf{W}_{j}\mathbf{n}^i_j\}_{i=1}^{N}$ in \eqref{BCDMMSE-solution} need to be collected from the other DUs besides DU $c$.
Therefore, based on \eqref{BCDMMSE-solution} and \eqref{BCDMMSE-update}, we derive a BCD-based LMMSE equalizer for the daisy chain architecture. Adopting the Gauss-Seidel update rule \cite{hong2017iteration}, at the $l$-th iteration, the local equalization matrix $\mathbf{W}_{c}^{l}$ is updated by solving the following subproblem:
\begin{equation}\label{BCDMMSE-update}
    \mathbf{W}_{c}^{l}=\arg\min _{\mathbf{W}_{c}} ~~ f\left( \mathbf{W}_{1}^{l},\ldots,\mathbf{W}_{c-1}^{l},\mathbf{W}_{c},\mathbf{W}_{c+1}^{l-1},\ldots,\mathbf{W}_{C}^{l-1}\right),
\end{equation}
where $f(\cdot)$ denotes the objective function of \eqref{eq_MMSE_problem_sample_DBP}.
Define $\mathbf{A}^{l}_{c}\in \mathbb{C}^{K\times K}$ and $\{\mathbf{b}_{c,i}^{l}\in \mathbb{C}^{K\times 1}\}_{i=1}^{N}$ as the communication variables at the $c$-th DU and $l$-th iteration, which are updated recursively by
\begin{equation}\label{A-update}
    \mathbf{A}^{l}_{c}=\mathbf{A}^{l}_{c-1}-\mathbf{W}^{l-1}_{c}\mathbf{H}_{c}+\mathbf{W}^{l}_{c}\mathbf{H}_{c},
\end{equation}
and
\begin{equation}\label{b-update}
    \mathbf{b}_{c,i}^{l}=\mathbf{b}_{c,i}^{l-1}-\mathbf{W}^{l-1}_{c}\mathbf{n}_{c}^{i}+\mathbf{W}^{l}_{c}\mathbf{n}_{c}^{i},i=1,2,\ldots,N.
\end{equation}	

 \begin{figure*}[hb] %hb代表放在文章底部，%ht为放在文章顶部 
 	\centering
  
  	\vspace{-0.3cm}
\begin{equation}\begin{aligned}\label{BCDMMSE-specific-update} \mathbf{W}_{c}^{l} = &\left(E_s\left(\mathbf{I}_{K}-\mathbf{A}^{l}_{c-1}+\mathbf{W}^{l-1}_{c}\mathbf{H}_{c}\right)\mathbf{H}_{c}^{H} -\frac{1}{N}\sum_{i=1}^{N}\left(\left(\mathbf{b}_{c-1,i}^{l}-\mathbf{W}_{c}^{l-1}\mathbf{n}_{c}^{i}\right)\left(\mathbf{n}_{c}^{i}\right)^{H}\right)\right) \times\left(E_s\mathbf{H}_{c}\mathbf{H}_{c}^{H}+\frac{1}{N}\sum_{i=1}^{N}\mathbf{n}^i_c(\mathbf{n}^i_c)^H\right)^{-1}.\end{aligned}\end{equation}%\vspace{-1cm}

%  	\vspace{-0.3cm}
 \end{figure*}
These communication variables can be seen as the summation of the compressed local channel matrices and compressed local noise samples, where the local equalization matrices act as local compression matrices. The initialization of these communication variables will be detailed in Algorithm \ref{alg-BCD}.
Specifically, the $c$-th DU receives the communication variables $\mathbf{A}^{l}_{c-1}$ and $\{\mathbf{b}_{c-1,i}^{l}\}_{i=1}^{N}$ from the previous DU\footnote{Here, we assume the previous DU of the first DU is the $C$-th DU, i.e., $\mathbf{A}^{l}_{0}$ and $\{\mathbf{b}_{0,i}^{l}\}_{i=1}^{N}$ represent $\mathbf{A}^{l-1}_{C}$ and $\{\mathbf{b}_{C,i}^{l-1}\}_{i=1}^{N}$, respectively.}.
Then the local equalization matrix $\mathbf{W}_{c}^{l}$ is updated by
\eqref{BCDMMSE-specific-update} at the bottom of this page. Except for the information received from the previous DU, all other terms in \eqref{BCDMMSE-specific-update} can be computed locally. 
Finally, the $c$-th DU updates $\mathbf{A}^{l}_{c}$ and $\mathbf{b}_{c,i}^{l}$ by \eqref{A-update} and \eqref{b-update}, respectively, and then transfers them to the next DU.

\begin{algorithm}[!tb]
    \renewcommand{\algorithmicrequire}{\textbf{Input:}}
    \renewcommand{\algorithmicensure}{\textbf{Output:}}
    \caption{The Proposed BCD-MMSE Equalization}  \label{alg-BCD}
    \begin{algorithmic}[1]
    \REQUIRE $\mathbf{y}_{c},\mathbf{H}_{c}, \{\mathbf{n}^i_c\}_{i=1}^{N}, c=1,2,\ldots,C$, $E_s$, and total iteration number $T$.
    \STATE \textit{\textbf{Preprocessing:}}  
    \STATE \quad Initialize $\mathbf{W}^{0}$ using \eqref{block-approximate} in a decentralized manner;
    \STATE \quad  $\mathbf{A}^{0}_{0} \gets \mathbf{0}$;
     \STATE \quad $\mathbf{b}_{0,i}^{0} \gets \mathbf{0},i=1,2,\ldots,N$;
     
    \STATE\quad \textbf{for} $c=1$ to $C$ \textbf{do}
     \STATE\quad \quad    $\mathbf{A}^{0}_{0} \gets \mathbf{A}^{0}_{0}+\mathbf{W}_{c}^{0}\mathbf{H}_{c}$;
       \STATE\quad \quad  $\mathbf{b}_{0,i}^{0} \gets \mathbf{b}_{0,i}^{0}+\mathbf{W}_{c}^{0}\mathbf{n}_{c}^{i},i=1,2,\ldots,N$;
 \STATE\quad \textbf{end for}
 
    \STATE  \textit{\textbf{BCD iterations:}}
    \STATE\quad \textbf{for} $l=1$ to $T$ \textbf{do}
     \STATE\quad \quad\textbf{for} $c=1$ to $C$ \textbf{do}
     \STATE\quad \quad  \quad Update $\mathbf{W}_{c}^{l}$ via \eqref{BCDMMSE-specific-update};
       \STATE\quad \quad\quad  Update $\mathbf{A}^{l}_{c}$ via \eqref{A-update};
      \STATE\quad \quad\quad Update $\mathbf{b}_{c,i}^{l},i=1,2,\ldots,N$ via \eqref{b-update};
 \STATE\quad \quad \textbf{end for}
 \STATE\quad \textbf{end for}

    \STATE  \textit{\textbf{Equalizer filter:}}
    \STATE \quad $\hat{\mathbf{s}}_{\text{BCD-MMSE}}\gets \mathbf{0}$;
    \STATE\quad \textbf{for} $c=1$ to $C$ \textbf{do}
     \STATE\quad \quad    $\hat{\mathbf{s}}_{\text{BCD-MMSE}}\gets \hat{\mathbf{s}}_{\text{BCD-MMSE}} + \mathbf{W}_c\mathbf{y}_c$;
 \STATE\quad \textbf{end for}

    \ENSURE $\mathbf{W}=[\mathbf{W}_{1},\mathbf{W}_{2},\ldots,\mathbf{W}_{C}]$ and $\hat{\mathbf{s}}_{\text{BCD-MMSE}}$.
    \end{algorithmic} 
\end{algorithm}

The BCD-based LMMSE equalization is summarized in Algorithm \ref{alg-BCD}, termed BCD-MMSE equalization. Fig.~\ref{fig:information}(a) shows the communication transfer in the iteration phase of the proposed BCD-MMSE method, where the superscript $l$ is omitted for brevity. Fig.~\ref{fig:information}(b) shows the equalizer filter phase for estimating the transmitted symbol. Note that $\mathbf{W}_{c}^l\in \mathbb{C}^{K \times M_c}$ can be viewed as a dimensionality reduction matrix that reduces the dimension of $\mathbf{H}_{c}$, $\mathbf{n}_{c}^{i}$, and $\mathbf{y}_{c}$ before transferring since ${K<M_c}$. For example, to share the covariance matrix information among DUs, the BCD-MMSE equalization only requires exchanging $\mathbf{b}^l_{c,i} \in \mathbb{C}^{K\times 1}$ among DUs, which significantly reduces the bandwidth compared to directly exchanging $\mathbf{n}_{c}^{i} \in \mathbb{C}^{M_c \times 1}$. Therefore, the data transfer size of the BCD-MMSE equalizer is independent of the number of BS antennas.

\begin{figure}[!tb]
    \centering
    \includegraphics[width=0.45\textwidth]{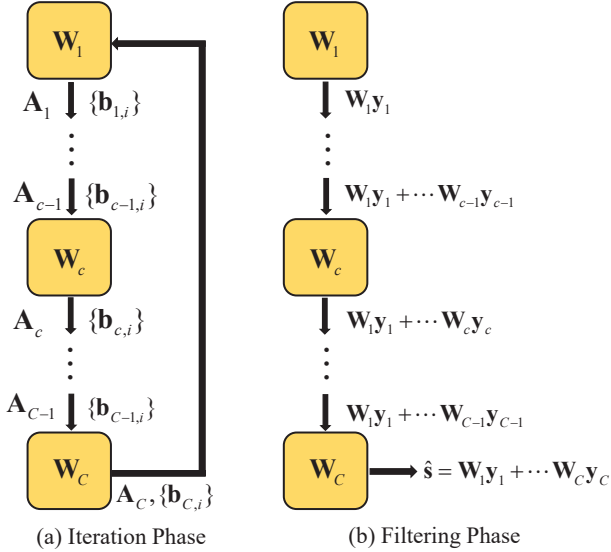}
    \caption{The communication and computation operations during the BCD-MMSE equalization process.}
    \label{fig:information}
\end{figure}

\begin{theorem}
\emph{(Convergence Results of the BCD-MMSE algorithm)} The proposed BCD-MMSE algorithm is guaranteed to converge to the global minimum of problem \eqref{MMSE-model}.
\end{theorem}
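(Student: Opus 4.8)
The plan is to recognize the BCD-MMSE iteration as \emph{exact cyclic block coordinate descent} applied to a strongly convex smooth quadratic, and then invoke the classical convergence theory for such schemes. First I would put the objective shared by \eqref{MMSE-model}, \eqref{eq_MMSE_problem_sample}, and \eqref{eq_MMSE_problem_sample_DBP} into closed form: using $\mathbb{E}[\mathbf{s}\mathbf{s}^{H}]=E_{s}\mathbf{I}$ and $\mathbb{E}[\mathbf{s}]=\mathbf{0}$, stacking the blocks into $\mathbf{W}=[\mathbf{W}_{1},\ldots,\mathbf{W}_{C}]$, and writing $\mathbf{R}$ for $(1/N)\sum_{i}\mathbf{n}^{i}(\mathbf{n}^{i})^{H}$ (as the paper does), a direct expansion gives $f(\mathbf{W})=\operatorname{Tr}(\mathbf{W}\mathbf{\Phi}\mathbf{W}^{H})-2E_{s}\operatorname{Re}\operatorname{Tr}(\mathbf{W}\mathbf{H})+E_{s}K$ with $\mathbf{\Phi}\triangleq E_{s}\mathbf{H}\mathbf{H}^{H}+\mathbf{R}$. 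Since $\mathbf{R}\succ\mathbf{0}$ we have $\mathbf{\Phi}\succ\mathbf{0}$, so $f$ is strongly convex and coercive with a \emph{unique} global minimizer; equating the gradient to zero gives $\mathbf{W}\mathbf{\Phi}=E_{s}\mathbf{H}^{H}$, and the push-through identity rewrites $\mathbf{W}^{\star}=E_{s}\mathbf{H}^{H}\mathbf{\Phi}^{-1}$ as exactly $\mathbf{W}_{\text{MMSE}}$ in \eqref{MMSE-solution}. This reduces the theorem to showing that the iterates $\{\mathbf{W}^{l}\}$ converge to the minimizer of $f$; that this minimizer indeed solves \eqref{MMSE-model} is immediate since the closed form exhibits the three problems as having the same objective.

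Second, I would check that Algorithm~\ref{alg-BCD} is literally cyclic BCD on $f$. By construction, the $c$-th step of sweep $l$ solves the subproblem \eqref{BCDMMSE-update}, i.e., it minimizes $f$ over the single block $\mathbf{W}_{c}$ while the remaining blocks are frozen at their most recent values; Proposition~\ref{thm_BCD_solution} supplies the closed-form minimizer \eqref{BCDMMSE-solution}, which the algorithm realizes through the recursions \eqref{BCDMMSE-specific-update}, \eqref{A-update}, \eqref{b-update}. The structural fact that makes this work is that the per-block problem is itself a \emph{strongly convex} quadratic: its curvature in $\mathbf{W}_{c}$ is governed by $E_{s}\mathbf{H}_{c}\mathbf{H}_{c}^{H}+\mathbf{R}_{cc}$, and $\mathbf{R}_{cc}$, as a principal submatrix of $\mathbf{R}\succ\mathbf{0}$, is positive definite --- exactly the matrix inverted in \eqref{BCDMMSE-solution}. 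Hence each block minimizer exists and is unique, so the iteration is well defined; and the ring topology makes the DUs update in the cyclic Gauss--Seidel order $1,2,\ldots,C,1,2,\ldots$, so every block is revisited in each sweep.

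Third, I would invoke the standard convergence result for cyclic block coordinate descent on a continuously differentiable function whose minimum along each coordinate block is attained at a unique point --- Proposition~2.7.1 of \cite{bertsekas1999nonlinear} (see also \cite{hong2017iteration} for the Gauss--Seidel rule): the values $\{f(\mathbf{W}^{l})\}$ are non-increasing, coercivity of $f$ keeps all iterates in the compact sublevel set $\{\mathbf{W}:f(\mathbf{W})\le f(\mathbf{W}^{0})\}$, so limit points exist and each is a coordinate-wise minimum of $f$; since $f$ is convex and differentiable, a coordinate-wise minimum is stationary and therefore globally optimal, and uniqueness of the global minimizer forces the \emph{whole} sequence $\{\mathbf{W}^{l}\}$ to converge to $\mathbf{W}_{\text{MMSE}}$, whence the filter output $\hat{\mathbf{s}}_{\text{BCD-MMSE}}=\sum_{c=1}^{C}\mathbf{W}_{c}\mathbf{y}_{c}=\mathbf{W}\mathbf{y}\to\mathbf{W}_{\text{MMSE}}\mathbf{y}$, i.e., BCD-MMSE attains the global minimum of \eqref{MMSE-model}. (Strong convexity plus smoothness even yields a linear rate, though this is not needed for the claim.) The main obstacle --- and the only point where the hypotheses really matter --- is guaranteeing uniqueness of the block minimizers, i.e., $E_{s}\mathbf{H}_{c}\mathbf{H}_{c}^{H}+\mathbf{R}_{cc}\succ\mathbf{0}$: without it, cyclic BCD on a merely convex function can stall away from a minimizer, so the argument genuinely leans on $\mathbf{R}\succ\mathbf{0}$ (equivalently, on the averaged $\hat{\mathbf{R}}$ being positive definite once enough noise samples are collected).
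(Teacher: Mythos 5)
Your proposal is correct and follows essentially the same route as the paper, which simply observes that the objective of \eqref{eq_MMSE_problem_sample_DBP} is continuously differentiable and strongly convex in each block $\mathbf{W}_c$ and then invokes the classical cyclic BCD convergence result of \cite{bertsekas1999nonlinear}. Your write-up supplies the details the paper leaves implicit (the closed-form quadratic, the identification of the minimizer with $\mathbf{W}_{\text{MMSE}}$, and the role of $\mathbf{R}_{cc}\succ\mathbf{0}$ in guaranteeing unique block minimizers), but the underlying argument is the same.
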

\begin{proof}
Since the objective function of \eqref{eq_MMSE_problem_sample_DBP} is a continuously differentiable and strongly convex function with respect to $\mathbf{W}_c$. According to \cite{bertsekas1999nonlinear}, Algorithm \ref{alg-BCD} is guaranteed to converge to the global minimum of problem \eqref{MMSE-model}.
\end{proof}

% \begin{remark} \emph{(Extension to Other DBP Architectures):}
% Although the BCD-MMSE equalizer was originally proposed for the unidirectional daisy chain architecture shown in Fig.~\ref{fig_DBP_architecture}(b), it can be applied to other DBP architectures by modifying the update rule in \eqref{BCDMMSE-update}. For instance, the symmetric Gauss-Seidel update rule \cite{sun2016efficient} is suitable for the bidirectional daisy chain architecture without a ring, where DUs are connected through bidirectional links, and there is no link between the last DU and the first DU. Furthermore, this update rule has a guaranteed convergence to global optima. Regarding the star DBP architecture depicted in Fig.~\ref{fig_DBP_architecture}(a), we could use the Jacobi update rule \cite{bertsekas2015parallel} for BCD iterations.
% \end{remark}

\subsection{Low-Rank Decomposition of the Covariance Matrix}
The proposed optimal BCD-MMSE equalizer effectively mitigates the bandwidth and computation limitations inherent in traditional centralized algorithm designs. However, the communication bandwidth required for each iteration still depends on the number of noise samples ($N$), which may exceed the number of BS antennas ($M$) in certain scenarios. One way to address this issue is to reduce the number of noise samples used for covariance estimation. Unfortunately, this approach results in significant performance degradation since it leads to less accurate covariance matrix estimation.

As previously mentioned, the existence of interference UEs causes colored noise. Specifically, we represent the channel matrix corresponding to the interference UEs as $\bar{\mathbf{H}}\in \mathbb{C}^{M\times r}$, where $r$ denotes the number of interference UEs. Then, we have $\mathbf{R}= E_s\bar{\mathbf{H}}\bar{\mathbf{H}}^H + \sigma^2 \mathbf{I}$, where $\sigma^2$ represents the power of the background noise. The number of interference UEs in communication systems is limited such that $r \ll \text{min}\left(M,N\right)$. Moreover, the interference power is generally much greater than the background noise power, i.e., $E_s \gg \sigma^2$. Consequently, the covariance matrix $\mathbf{R}$ is usually \emph{approximately low-rank}, which can be well-approximated by a rank-$r$ matrix. 
Therefore, instead of estimating $\mathbf{R}$, we can obtain its rank-$r$ approximation. This approach offers the benefit of decentralized algorithm design with only minor performance degradation.

Given a positive semidefinite matrix $\mathbf{R} \in \mathbb{C}^{M \times M}$, its rank-$r$ approximation seeks a low-rank substitute $\mathbf{G}\in \mathbb{C}^{M \times r}$ of its genuine square root such that $\mathbf{G}\mathbf{G}^H$ closely approximates $\mathbf{R}$, i.e.,
\begin{equation}\label{low_rank_problem}  
\mathbf{G} =\arg\min_{\text{Rank}(\mathbf{G})\leq r} \|\mathbf{R}-\mathbf{G}\mathbf{G}^H\|_F^2. 
\end{equation}
The above problem can be efficiently solved. Firstly, we define $\mathbf{N}\triangleq1/\sqrt{N}\left[\mathbf{n}^1,\ldots,\mathbf{n}^N \right]$ and rewrite the estimation of $\mathbf{R}$ in \eqref{equation:estimate_R} as
\begin{equation}\label{reconsider_equation:estimate_R}
    \hat{\mathbf{R}}=\frac{1}{N}\sum_{i=1}^{N}\mathbf{n}^i(\mathbf{n}^i)^H = \mathbf{N}\mathbf{N}^{H}=\left(\begin{array}{c}
                \mathbf{N}_1 \\
                \vdots          \\
                \mathbf{N}_C \\
            \end{array}\right)
            \left(\mathbf{N}_1^H , \ldots,\mathbf{N}_C^H\right),
\end{equation}
where $\mathbf{N}_c\in \mathbb{C}^{M_c \times N}$ is the local (scaled) noise samples stored at the $c$-th DU.
Then, denote the SVD of $\mathbf{N}\in \mathbb{C}^{M \times N}$ by
\begin{equation}\label{eq_LR_H_svd}   
\mathbf{N} = \mathbf{U}\mathbf{\Sigma}\mathbf{V}^H,
\end{equation}
where $\mathbf{\Sigma} \in \mathbb{C}^{M \times M}$ is a diagonal matrix with positive singular values sorted in descending order, $\mathbf{U} \in \mathbb{C}^{M \times M}$ and $\mathbf{V} \in \mathbb{C}^{N \times M}$ are matrices containing the left and right singular vectors of $\mathbf{N} $ as their columns, respectively. 
According to Eckart-Young-Mirsky Theorem \cite{eckart1936approximation}, the optimal solution to the rank-$r$ approximation problem \eqref{low_rank_problem} is given by \begin{equation}\label{low_rank_1}
\mathbf{G}=\tilde{\mathbf{U}}\tilde{\mathbf{\Sigma}}=\mathbf{N}\tilde{\mathbf{V}},
\end{equation}
where $\tilde{\mathbf{\Sigma}}\in \mathbb{C}^{r \times r }$ is a diagonal matrix consisting of the $r$ largest singular values of $\mathbf{N}$, $\tilde{\mathbf{U}} \in \mathbb{C}^{M \times r }$ and $\tilde{\mathbf{V}}\in \mathbb{C}^{N \times r }$ being associated left and right singular vectors corresponding with singular values in $\tilde{\mathbf{\Sigma}}$. In addition, we call $\tilde{\mathbf{U}}\tilde{\mathbf{\Sigma}}\tilde{\mathbf{V}}$ the \emph{rank-$r$ decomposition} of $\mathbf{N}$. 

We aim to replace $\mathbf{N}$ with $\mathbf{G}$ in the proposed BCD-MMSE algorithm, which can reduce the bandwidth burden at each iteration since $r \ll N$. Specifically, we focus on finding $\mathbf{G}=\mathbf{N}\tilde{\mathbf{V}}$ in a decentralized manner, i.e., computing $\mathbf{G}_c=\mathbf{N}_c\tilde{\mathbf{V}}$ at the $c$-th DU with appropriate communication among DUs. The crucial step is to determine $\tilde{\mathbf{V}}$ since $\mathbf{N}_c$ is known at the $c$-th DU. 

\begin{algorithm}[!tb]
    \renewcommand{\algorithmicrequire}{\textbf{Input:}}
    \renewcommand{\algorithmicensure}{\textbf{Output:}}
    \caption{The Proposed LRD Algorithm}  \label{alg-LRD}
    \begin{algorithmic}[1]
    \REQUIRE $\mathbf{N}=[\mathbf{N}_{1}^{T},\mathbf{N}_{2}^{T},\ldots,\mathbf{N}_{C}^{T}]^{T}$.
     
    \FOR{$c=1$ to $C-1$}
    \IF{ $c=1$}
    \STATE $\tilde{\mathbf{N}}_{0}\gets\varnothing$;
    \ELSE
    \STATE $\tilde{\mathbf{N}}_{c-1}\gets\mathbf{D}_{c-1}\mathbf{V}_{c-1}^{H}$;
    \ENDIF
    \STATE  Compute rank-$r$ decomposition of $\left[\tilde{\mathbf{N}}_{c-1}^{T},\mathbf{N}_{c}^{T}\right]^{T}$ as $=\mathbf{U}_c\mathbf{\Sigma}_c\mathbf{V}_c^H$;
    \STATE $\mathbf{D}_c\gets\mathbf{U}_c\mathbf{\Sigma}_c$;
    \STATE Transfer $\mathbf{D}_c$ and $ \mathbf{V}_c$ to the next DU;
\ENDFOR
 
    \STATE  \textit{\textbf{At the $C$-th DU:}}
    \STATE  Compute rank-$r$ decomposition of $\left[\tilde{\mathbf{N}}_{C-1}^{T},\mathbf{N}_{C}^{T}\right]^{T}$ as $\mathbf{U}_C\mathbf{\Sigma}_C\mathbf{V}_C^H$;
   \STATE \textit{\textbf{Broadcast and local computation:}}
   \STATE  Broadcast $\mathbf{V}_C$ to each DU, and compute 
   $\mathbf{G}_c\gets\mathbf{N}_c\mathbf{V}_C$ at each DU;

    \ENSURE $\mathbf{G}=[\mathbf{G}_{1}^{T},\mathbf{G}_{2}^{T},\ldots,\mathbf{G}_{C}^{T}]^{T}$.
    \end{algorithmic}
\end{algorithm}

Traditionally, to compute the SVD of matrix $\mathbf{N}$, it is necessary to collect all noise samples. Specifically, the $c$-th DU needs to transmit $\left[\mathbf{N}_{1}^{T},\ldots,\mathbf{N}_{c}^{T}\right]^{T}\in \mathbb{C}^{cM_c\times N}$ to the next DU. However, this results in a significant bandwidth burden due to the large data transfer size of $M\times N$ at the last DU. To address this issue, at the $c$-th DU, we transmit the \emph{rank-$r$ decomposition} of $\left[\mathbf{N}_{1}^{T},\ldots,\mathbf{N}_{c}^{T}\right]^{T}$, which significantly reduces the bandwidth requirements with only minor performance degradation due to the approximately low-rank property of $\mathbf{R}$. We summarize the proposed algorithm in Algorithm \ref{alg-LRD}, which is termed the low-rank decomposition (LRD) algorithm. In the LRD algorithm, the $c$-th DU transmits $\mathbf{D}_c\in \mathbb{C}^{cM_c\times r}$ and $\mathbf{V}_c\in \mathbb{C}^{N\times r}$ to the next DU, with a size proportional to $(M+N)\times r$, which is far less than $M\times N$. In practice, the value of $r$ is unknown but can be determined by the number of large eigenvalues of $\mathbf{R}$ or singular values of $\mathbf{N}$.

% Moreover, we have the following theorem that guarantees the performance of the LRD algorithm.

% \begin{theorem}\label{thm_LRD}
% The proposed LRD algorithm will exactly output the $\mathbf{G}$ in \eqref{low_rank_1} with $\mathbf{P}=\mathbf{I}$.
% \end{theorem}
% \begin{proof}
% See Appendix \ref{pro_lrd_equal}.
% \end{proof}

% Following Theorem \ref{thm_LRD}, 

Based on the above discussions, we can represent the $N$ noise samples with the $r$ column vectors of $\mathbf{G}$ with little loss. Specifically, we first perform the LRD algorithm and then carry out the BCD-MMSE equalization by replacing $\mathbf{N}$ with $\mathbf{G}$. We refer to it as BCD-MMSE(LRD) equalization, which has a small bandwidth cost at each iteration.

\section{Computational Complexity and Communication Bandwidth Analysis}\label{sec_computational_bandwidth}
% In this section, we investigate the computational complexity and communication bandwidth of the proposed algorithms, including sDR-MMSE, cDR-MMSE, BCD-MMSE, and BCD-MMSE(LRD) equalization methods. We also provide the centralized LMMSE equalizer as a baseline.

\subsection{Computational Complexity Analysis}
Throughout the section, we consider the common scenario where ${\text{min}\{M,N\}>M_c>K}$. The computational complexity of the centralized LMMSE equalizer in \eqref{MMSE-solution} is $\mathcal{O}\left(M^3+NM^2\right)$. Since $M$ can be extremely large, the cubic complexity is impractical for massive MIMO systems.

The computational complexity of the proposed sDR-MMSE and cDR-MMSE equalizers are $\mathcal{O}(MM_c^2 + NMM_c)$ and $\mathcal{O}(MM_c^2 + NMM_c+K^3C^3)$, respectively. They have the same preprocessing with a complexity of $\mathcal{O}(M_c^3 + NM_c^2)$ at each DU. The complexity of cDR-MMSE equalization is much higher than that of sDR-MMSE equalization because the concatenating and superimposing operation at the CU leads to a $CK\times CK$ and $K\times K$ matrix inversion, respectively.  

The complexity of initializing the proposed BCD-MMSE and BCD-MMSE(LRD) equalization, known as the BDAC-MMSE algorithm, is dominated by $\mathcal{O}(MM_c^2 + NMM_c)$. The proposed LRD algorithm only needs to be executed once and has a complexity of $\mathcal{O}\left(\min\left(M^2N,N^2M\right)\right)$. The per-iteration complexity of the BCD-MMSE equalization is $\mathcal{O}\left( NMK+MM_cK\right)$. Benefiting from the LRD algorithm, the per-iteration complexity of the BCD-MMSE(LRD) equalization is $\mathcal{O}\left( rMK+MM_cK\right)$, which is independent of $N$, where $r\ll N$.
Therefore, increasing the number of iterations of the BCD-MMSE(LRD) algorithm requires only a small amount of computation.

The complexity of our proposed methods, except for the BCD-MMSE(LRD) method, scales linearly with $M$ instead of $\mathcal{O}\left(M^3\right)$. The BCD-MMSE(LRD) method has low per-iteration complexity that is independent of both $M$ and $N$. Therefore, all the proposed algorithms significantly alleviate the computational bottleneck in massive MIMO systems.

% Note that the initialization of BCD-MMSE has higher computational complexity than that of one iteration of BCD-MMSE. 
% Therefore, to further alleviate the complexity of BCD-MMSE, we can adopt a more computationally efficient scheme such as ZF as the initial point at the cost of BER performance loss.

\subsection{Communication Bandwidth Analysis}
The communication bandwidth is evaluated by the total number of real-valued entries transferred among the DUs. We assume that the channel is static across $N_{\text{coh}}$ contiguous symbols, where the noise covariance estimation and the equalization matrix can be reused for different symbols. Thus the data required for computing the equalization matrix is transferred only once for $N_{\text{coh}}$ symbols, while the received signal $\mathbf{y}$ must be transferred for each symbol.

The average data transfer size of the centralized LMMSE method is $2M(N_{\text{coh}}+K+N)/N_{\text{coh}}$. The proposed sDR-MMSE and cDR-MMSE methods have the same average data transfer size of $2CK\left(N_{\text{coh}}+K+N\right)/N_{\text{coh}}$. The average data transfer size of the proposed BCD-MMSE method is $ C(4K^2+2NK)/N_{\text{coh}} + 2TCK(N+K)/N_{\text{coh}}+2CK$, where the three terms are induced by preprocessing, $T$ iterations for computing equalization matrix, and symbol estimation, respectively. Notably, the average data transfer size of the aforementioned methods is independent of $M$. Additionally, the proposed BCD-MMSE(LRD) method only needs to transfer $((C-1)Mr+4CNr)/N_{\text{coh}}+C(4K^2+2Kr)/N_{\text{coh}} + 2TCK(r+K)/N_{\text{coh}}+2CK$ entries. The first term is caused by the LRD algorithm, the second term is due to the preprocessing, the third term is induced by the iteration of the BCD-MMSE(LRD) method, and the last term is due to symbol estimation. Interestingly, the data transfer size at each iteration of the BCD-MMSE(LRD) method is independent of both $M$ and $N$. Therefore, all the proposed methods can achieve a decentralized baseband processing design with a relatively small communication bandwidth among DUs.

\section{Simulation Results}\label{sec_simulation}

\begin{figure*}
    \centering
    \subfloat[$M=128$, $C=8$.]{\includegraphics[width=0.28\textwidth, valign=c]{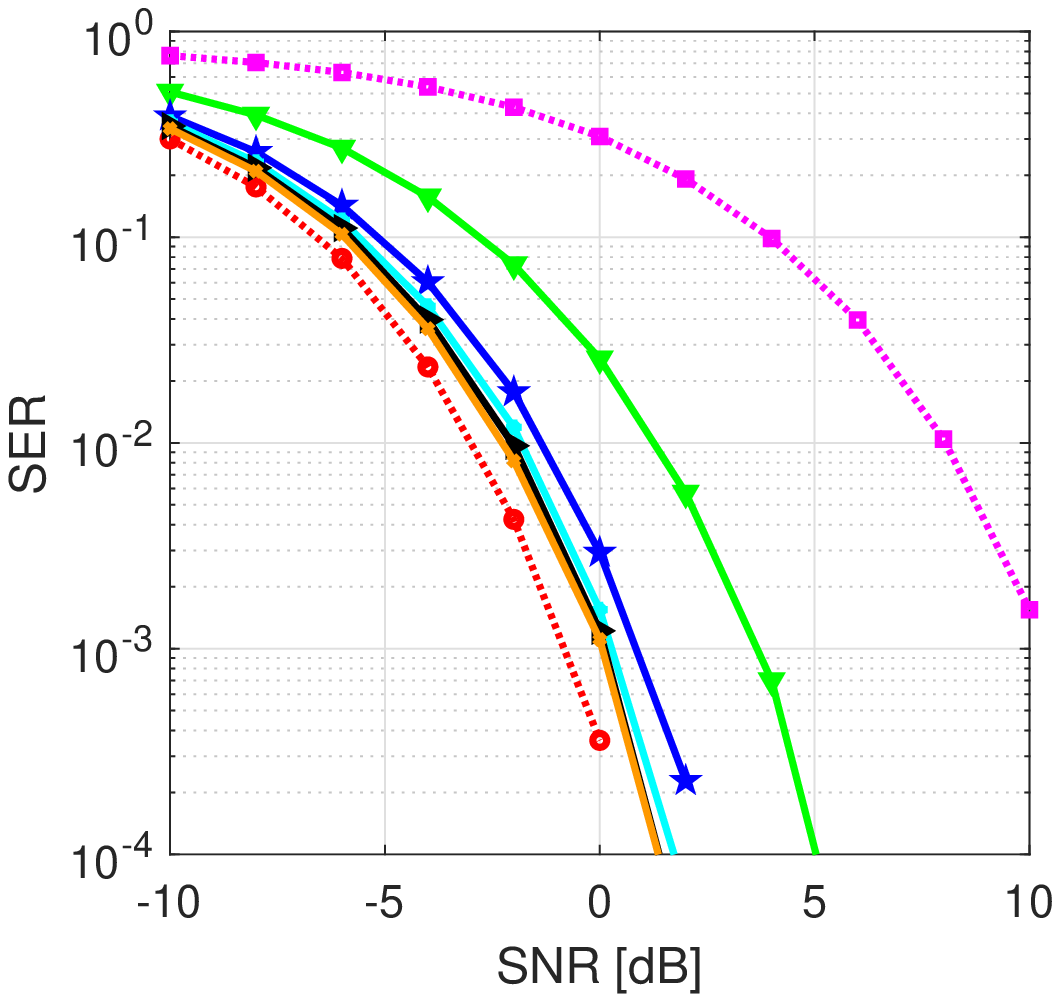}}
    \subfloat[$M=256$, $C=8$.]{\includegraphics[width=0.28\textwidth, valign=c]{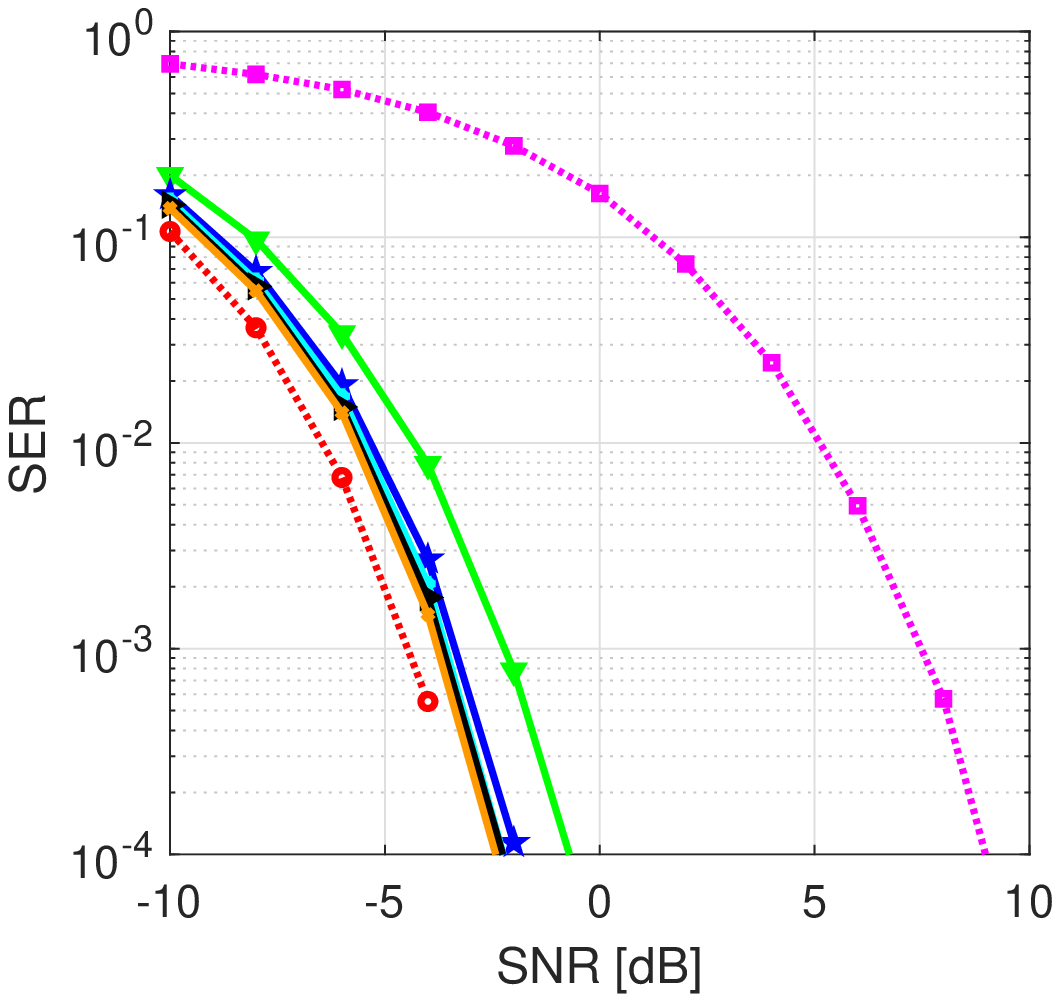}}
    \subfloat[$M=256$, $C=16$.]{\includegraphics[width=0.28\textwidth, valign=c]{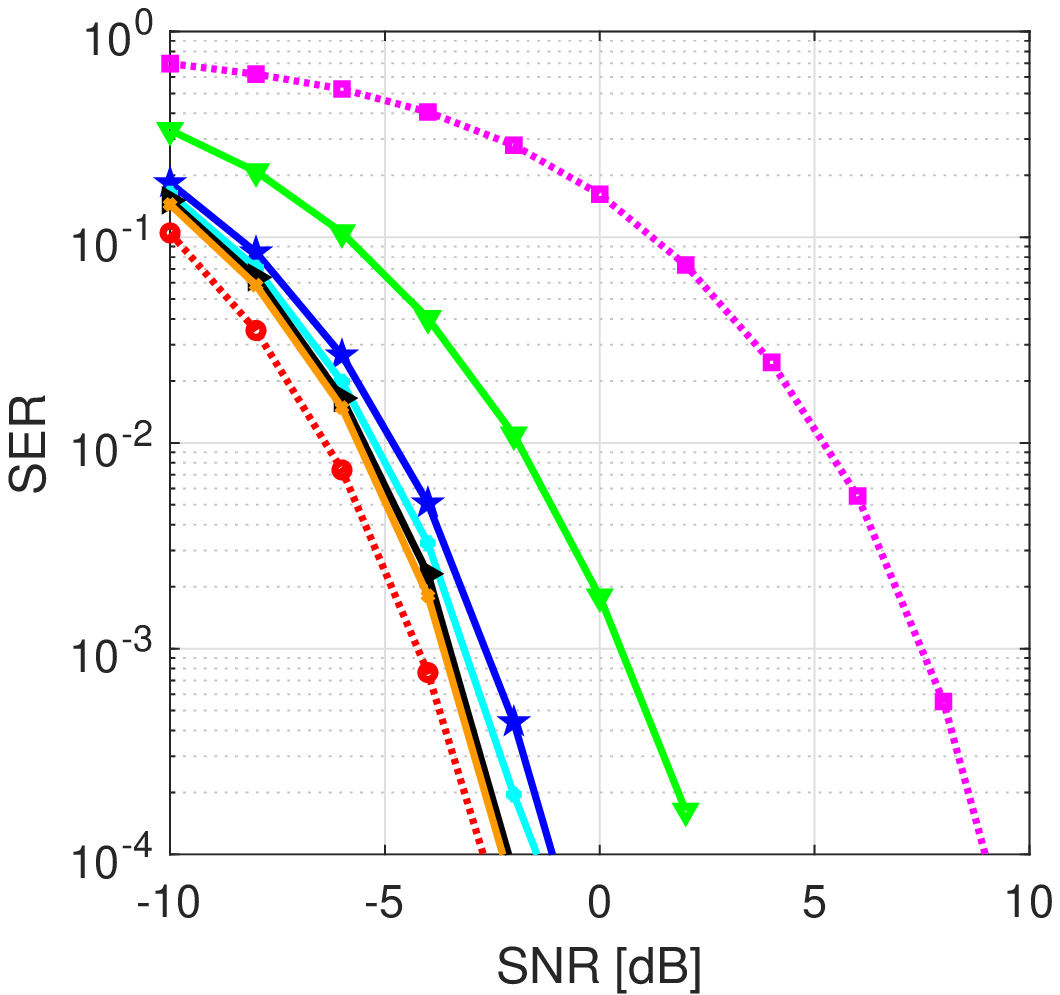}} 
    \subfloat{\includegraphics[width=0.16\textwidth, valign=t]{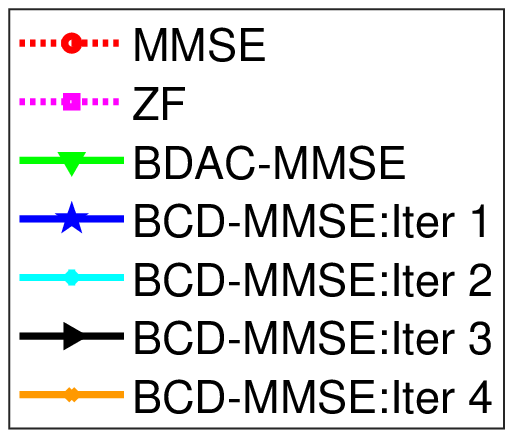}}
    \caption{SER performance versus SNR for the first four iterations of the proposed BCD-MMSE equalizer.}
    \label{fig:BCD_results}
\end{figure*}

\begin{figure*}
    \centering
    \subfloat[$M=128$, $C=8$.]{\includegraphics[width=0.28\textwidth, valign=c]{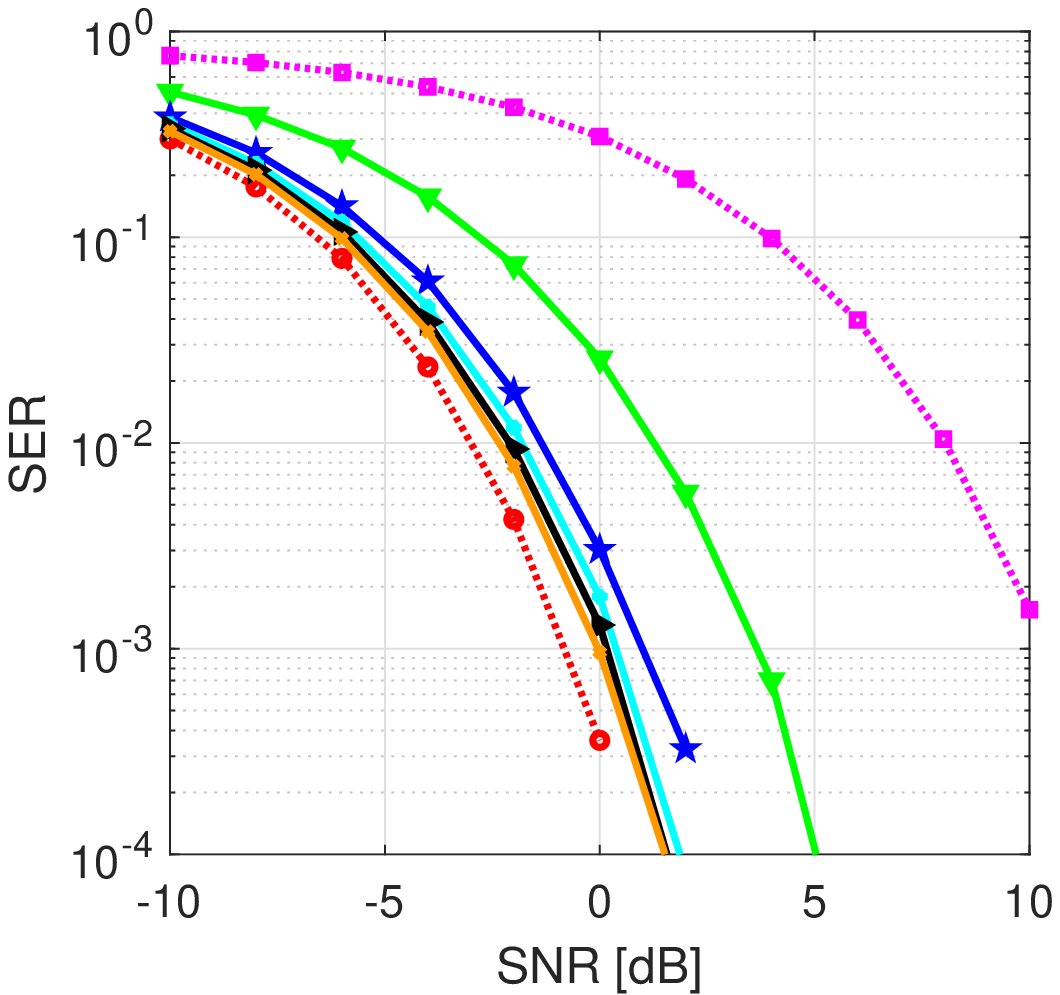}}
    \subfloat[$M=256$, $C=8$.]{\includegraphics[width=0.28\textwidth, valign=c]{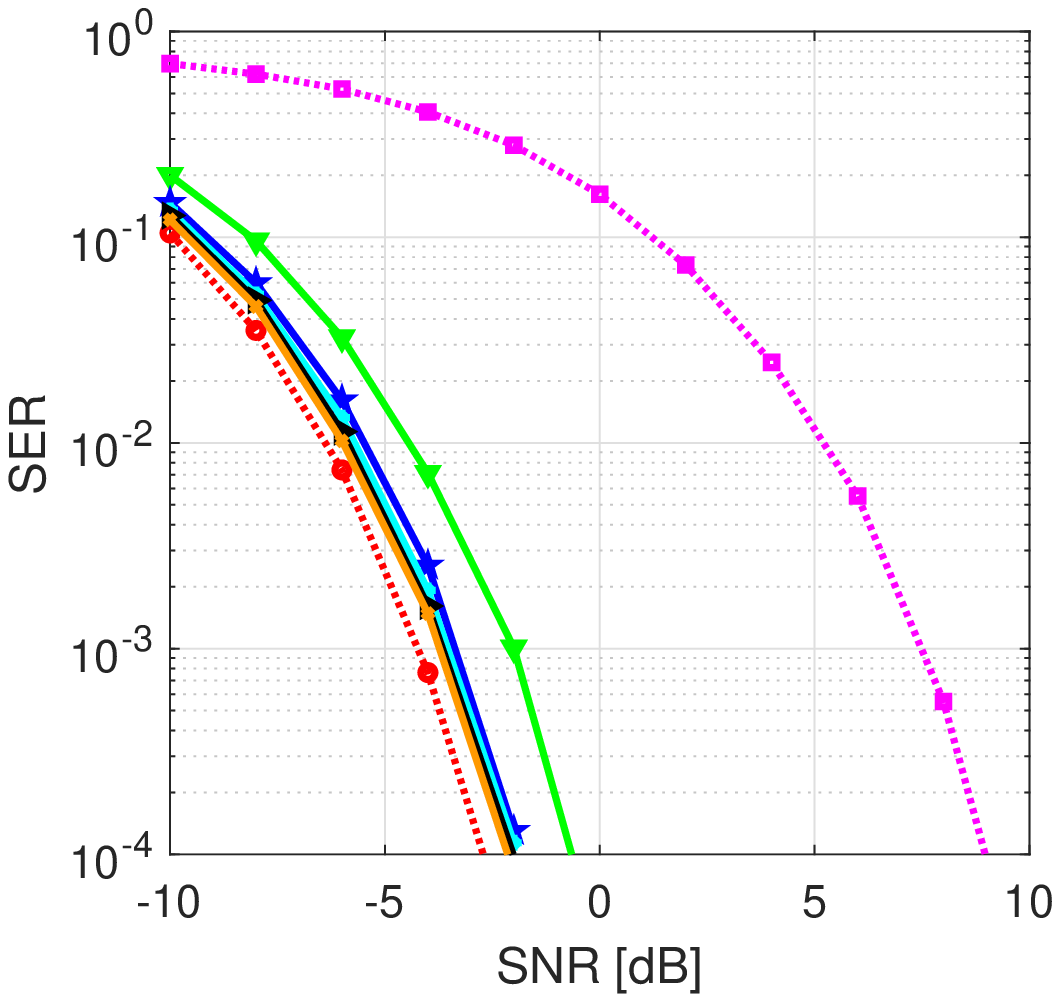}}
    \subfloat[$M=256$, $C=16$.]{\includegraphics[width=0.28\textwidth, valign=c]{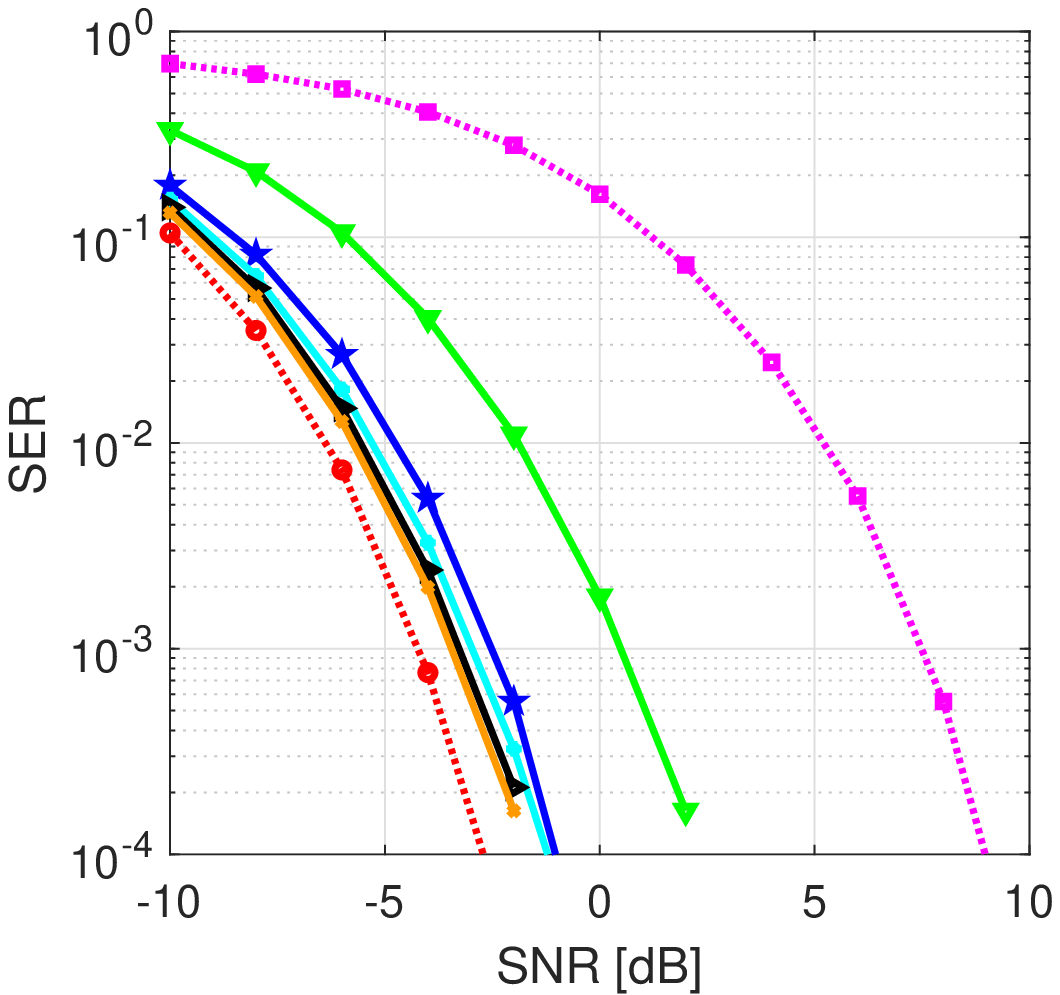}}
    \subfloat{\includegraphics[width=0.16\textwidth, valign=t]{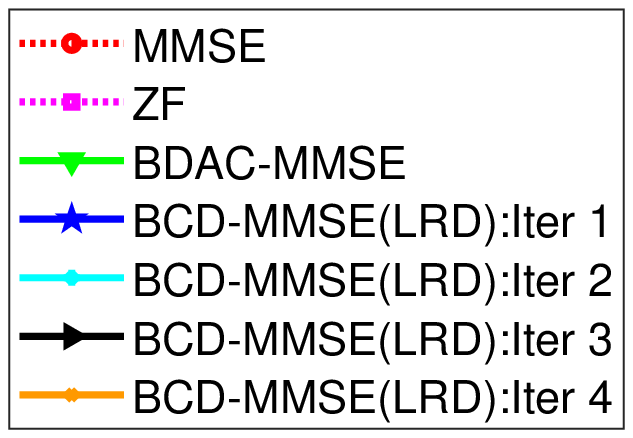}}
    \caption{SER performance versus SNR for the first four iterations of the proposed BCD-MMSE(LRD) equalizer.}
    \label{fig:BCD_LRD_results}
\end{figure*}

\subsection{Simulation Setup}
We investigate a single-cell massive MIMO system comprising of a BS equipped with $M=$ 128 or 256 antennas, divided into $C=$ 8 or 16 clusters, each cluster having $M_c = M/C$. The number of target UEs is $K=8$ unless otherwise specified, and the number of interference UEs is also set to be $K$. The number of noise samples is $N=192$. The channel matrix is generated from the QuaDRiGa platform \cite{jaeckel2014quadriga}, which considers both large and small-scale fading. The UEs are evenly scattered in a $120^{\text{o}}$ sector of radius $50\sim 100$ m centered at BS. The colored noise is modeled as interference from non-target UEs plus background AWGN. The signal-to-noise ratio (SNR) is defined as $\text{SNR} = 10\log_{10}\left(  E_s/N_0 \right)$, where $N_0$ is the power of background AWGN. The interference over thermal (IoT) is used to measure the intensity of interference relative to background noise, i.e., $\text{IoT} = 10\log_{10}\left( \left( \beta E_s+N_0\right)/N_0 \right)$, where $\beta$ is the power of interference UEs. Large IoT leads to significant interference, i.e., the off-diagonal elements of the noise covariance matrix are more prominent. $\text{IoT} = 10$ dB is a default setting in the simulation since it is a typical scenario in practical systems. Under our settings, the noise covariance is approximately low-rank. The simulation results are averaged over 100 randomly generated channel realizations, and 480 symbols are drawn independently from the 16-QAM constellation for each channel.

\subsection{SER Performance of the Proposed BCD-based Equalizers With Different Number of Iterations}

In this subsection, we evaluate the symbol error rate (SER) performance of the proposed BCD-based equalizers, where SER represents the percentage of symbols that have errors relative to the total number of symbols received. Due to bandwidth and computation limitations, we focus on the first four iterations of the BCD algorithms. Three baselines are considered, including the commonly used centralized ZF method, the centralized LMMSE method, and the proposed intuitive BDAC-MMSE method that serves as the initial point of the BCD-based algorithms.

Fig. \ref{fig:BCD_results} shows the SER performance of the first four iterations of the proposed BCD-MMSE equalizer. It can be observed that only 2 to 3 BCD iterations are sufficient to approach the performance of the centralized LMMSE method for all cases\footnote{There is a small gap between the BCD-MMSE method and the centralized LMMSE method because the number of iterations is not enough. Actually, after sufficient iterations, they will overlap completely.}. Even a single BCD iteration enables excellent SER performance. In addition, Fig. \ref{fig:BCD_results}(a) and (b) illustrate that when the BS employs more antennas, the performance gap between the BCD-MMSE equalizer and LMMSE equalizer decreases, and all the equalizers attain better performance. Thus, the BCD-MMSE equalizer is suitable for extremely-large antenna arrays. As the number of clusters increases, the BDAC-MMSE equalizer utilizes a smaller portion of the covariance matrix for decentralized processing, which incurs larger performance loss. Hence the performance of the BCD-MMSE equalizer decreases slightly, as shown in Fig. \ref{fig:BCD_results}(a) and (c).

Fig. \ref{fig:BCD_LRD_results} illustrates the SER  performance of the first four iterations of the BCD-MMSE(LRD) equalizer. The performance of the BCD-MMSE(LRD) algorithm is almost the same as the BCD-MMSE algorithm in Fig. \ref{fig:BCD_results}. According to the complexity and bandwidth analysis in Section \ref{sec_computational_bandwidth}, the LRD algorithm significantly reduces the dimension of communication variables. Thus, the complexity and bandwidth at each iteration are considerably reduced.

\subsection{SER Performance of all the Proposed Equalizers}

\begin{figure*}
    \flushleft
    \subfloat[$M=128$, $C=8$.]{\includegraphics[width=0.28\textwidth, valign=c]{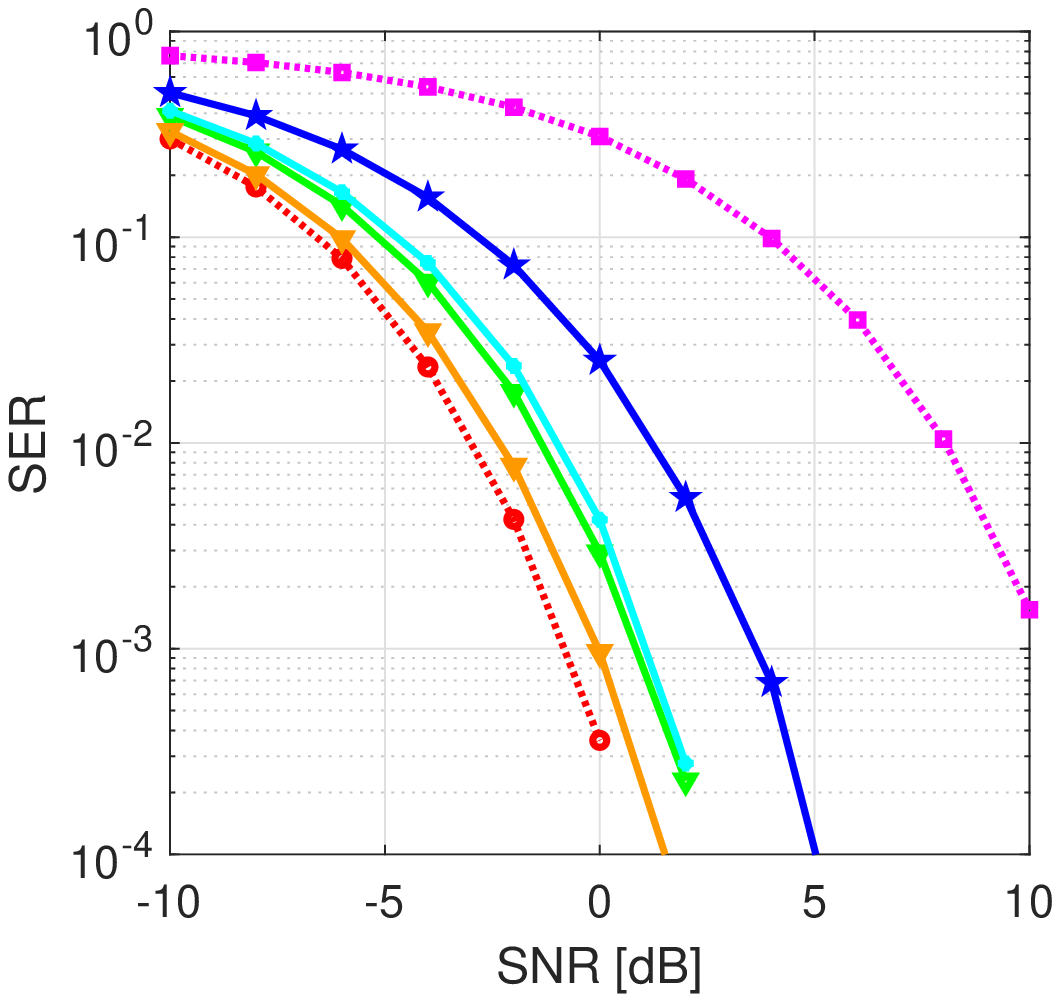}}
    \subfloat[$M=256$, $C=8$.]{\includegraphics[width=0.28\textwidth, valign=c]{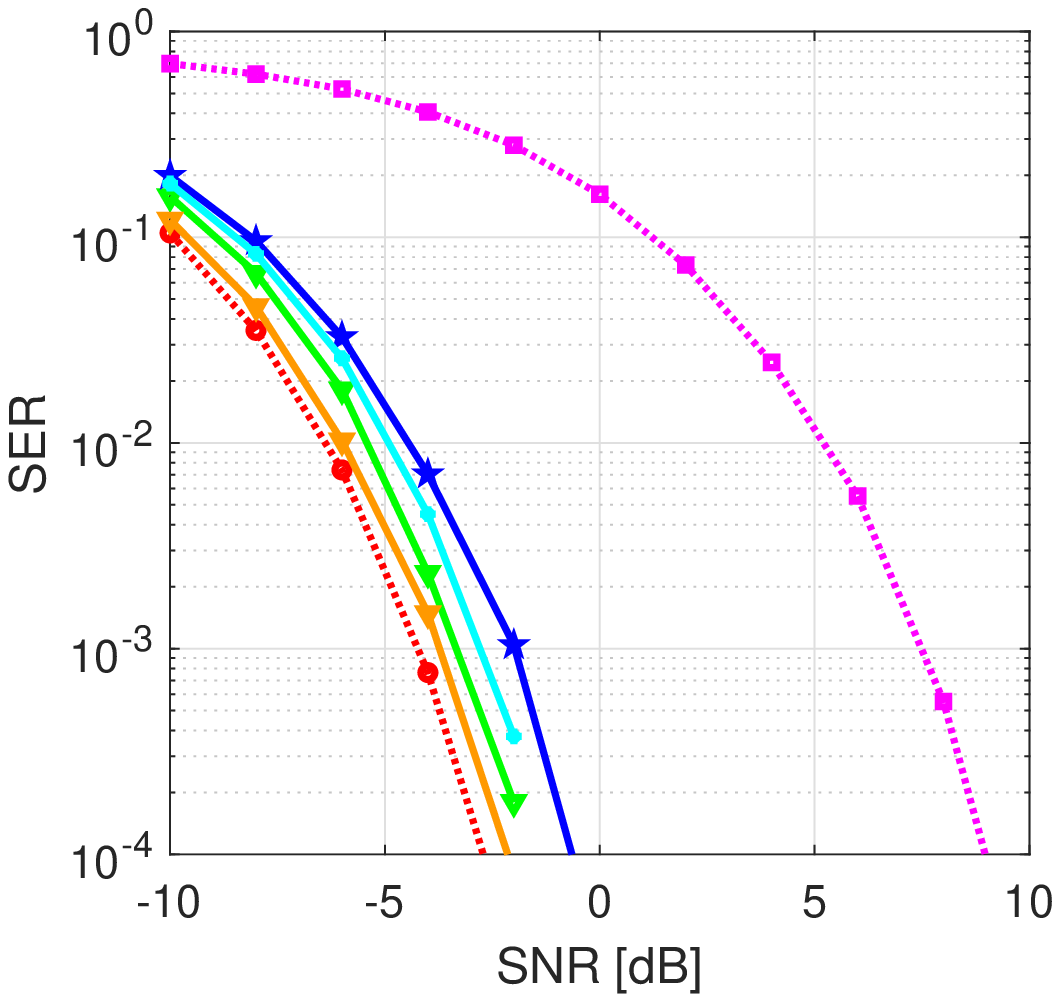}}
    \subfloat[$M=256$, $C=16$.]{\includegraphics[width=0.28\textwidth, valign=c]{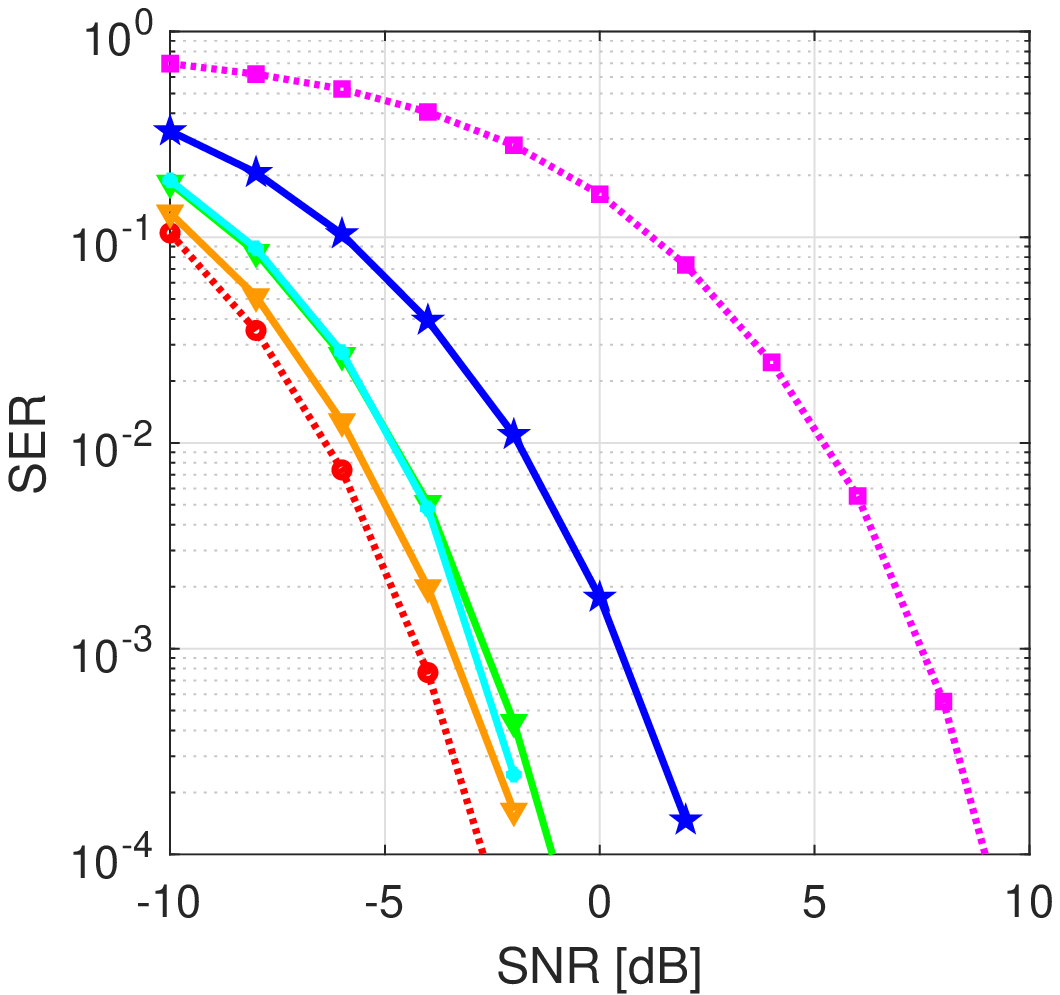}}
    \subfloat{\includegraphics[width=0.16\textwidth, valign=t]{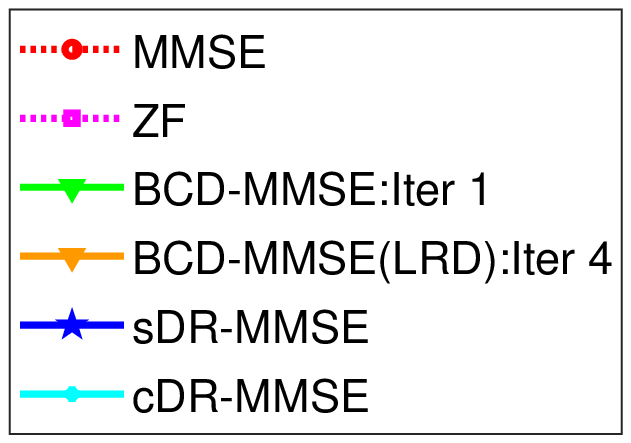}}
    \\
    
    \subfloat[$M=128$, $K=12$, $C=8$.]{\includegraphics[width=0.28\textwidth, valign=c]{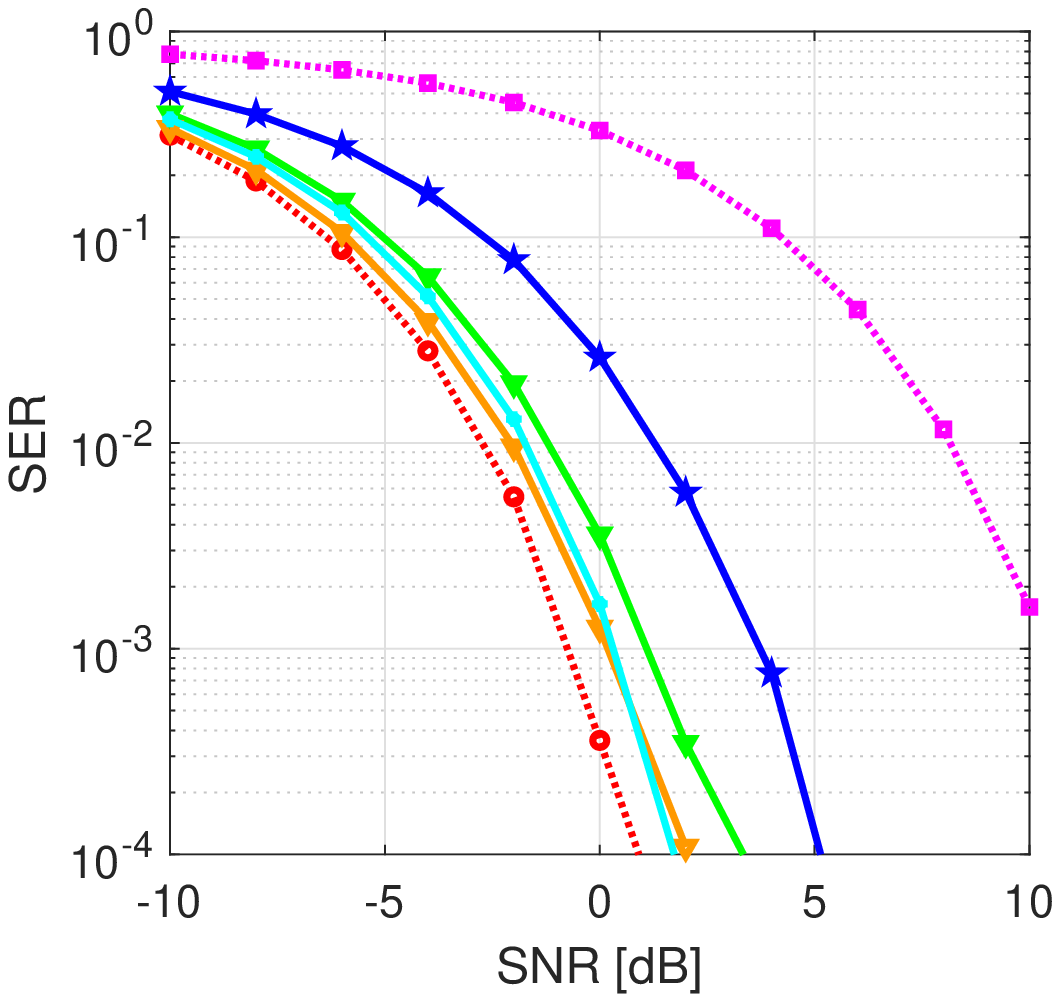}}
    \subfloat[$M=128$, $C=8, \text{IoT}=20\text{dB}$.]{\includegraphics[width=0.28\textwidth, valign=c]{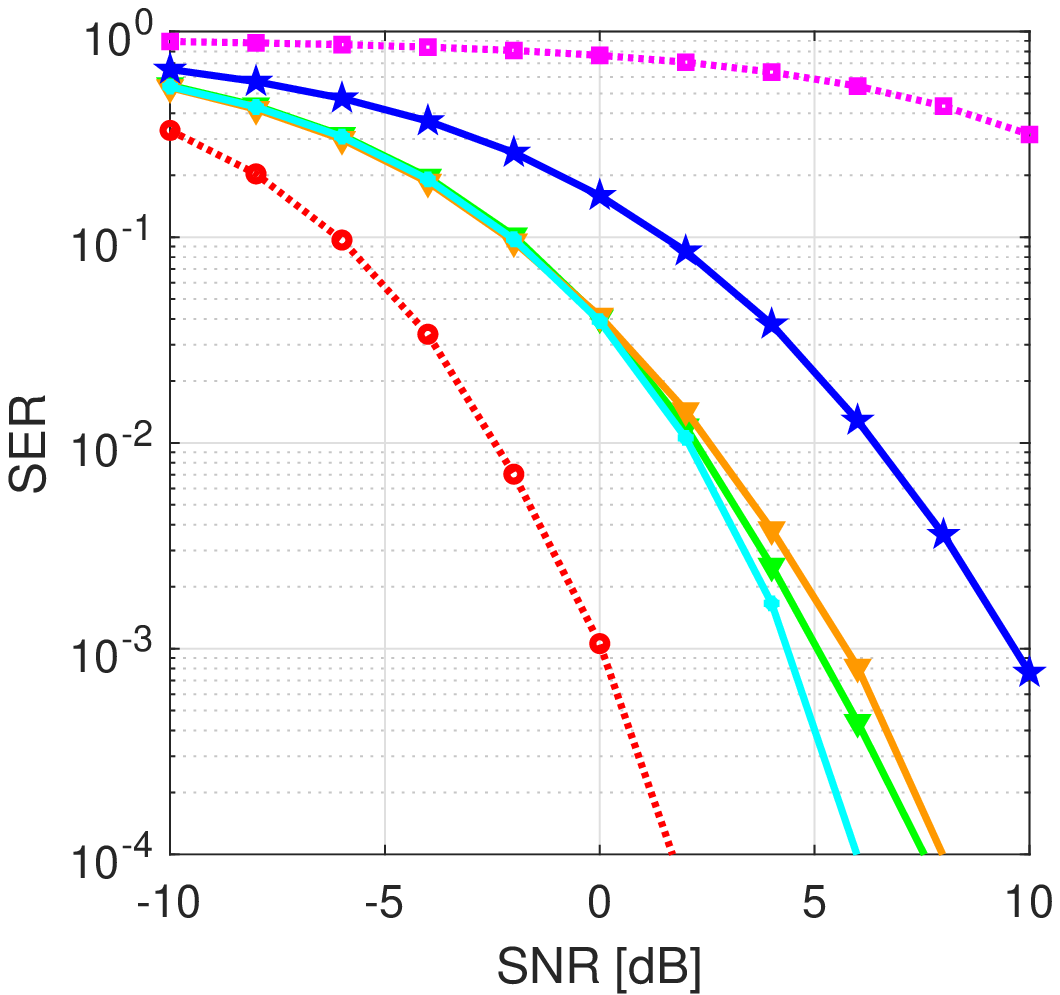}}
    \subfloat[$M=128$, $C=8, \text{QPSK}$.]{\includegraphics[width=0.28\textwidth, valign=c]{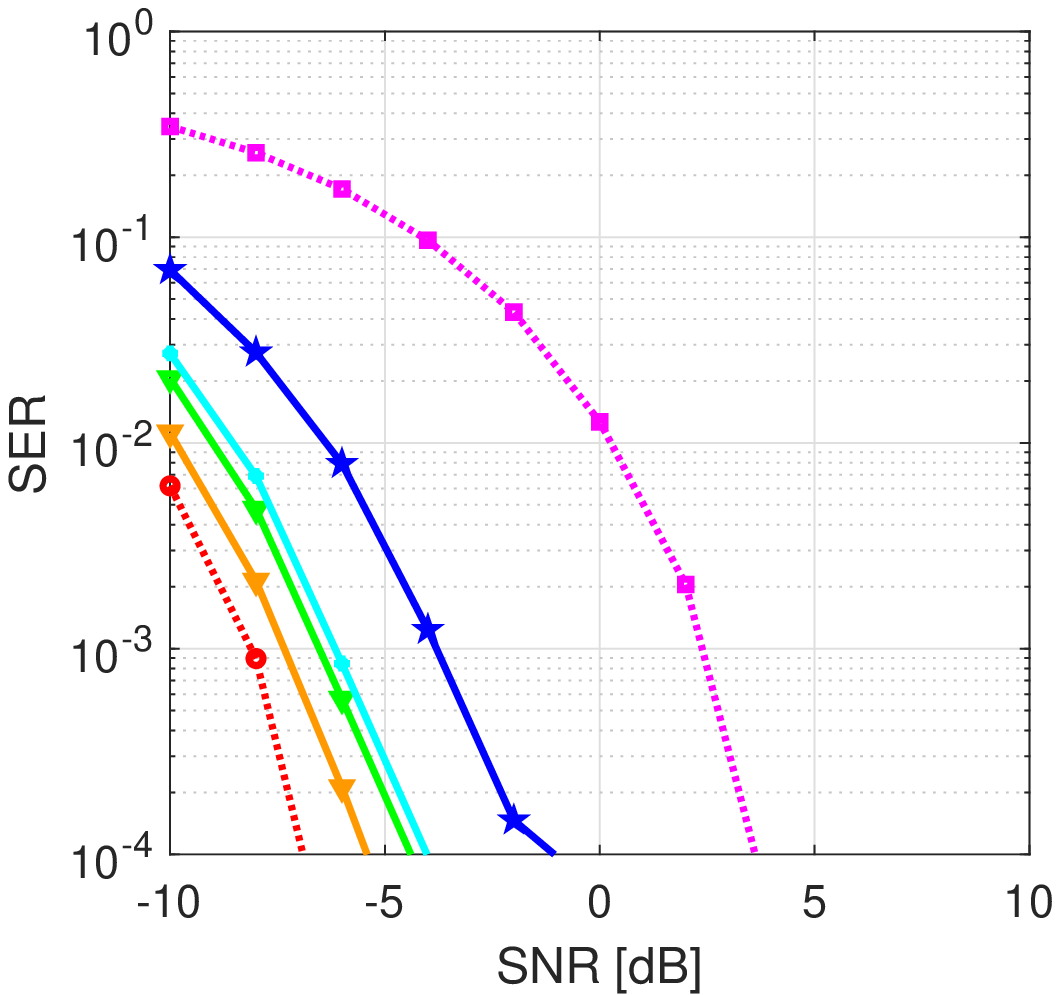}}
    \subfloat{\includegraphics[bb= 0 0 500 400, width=0.16\textwidth, valign=c]{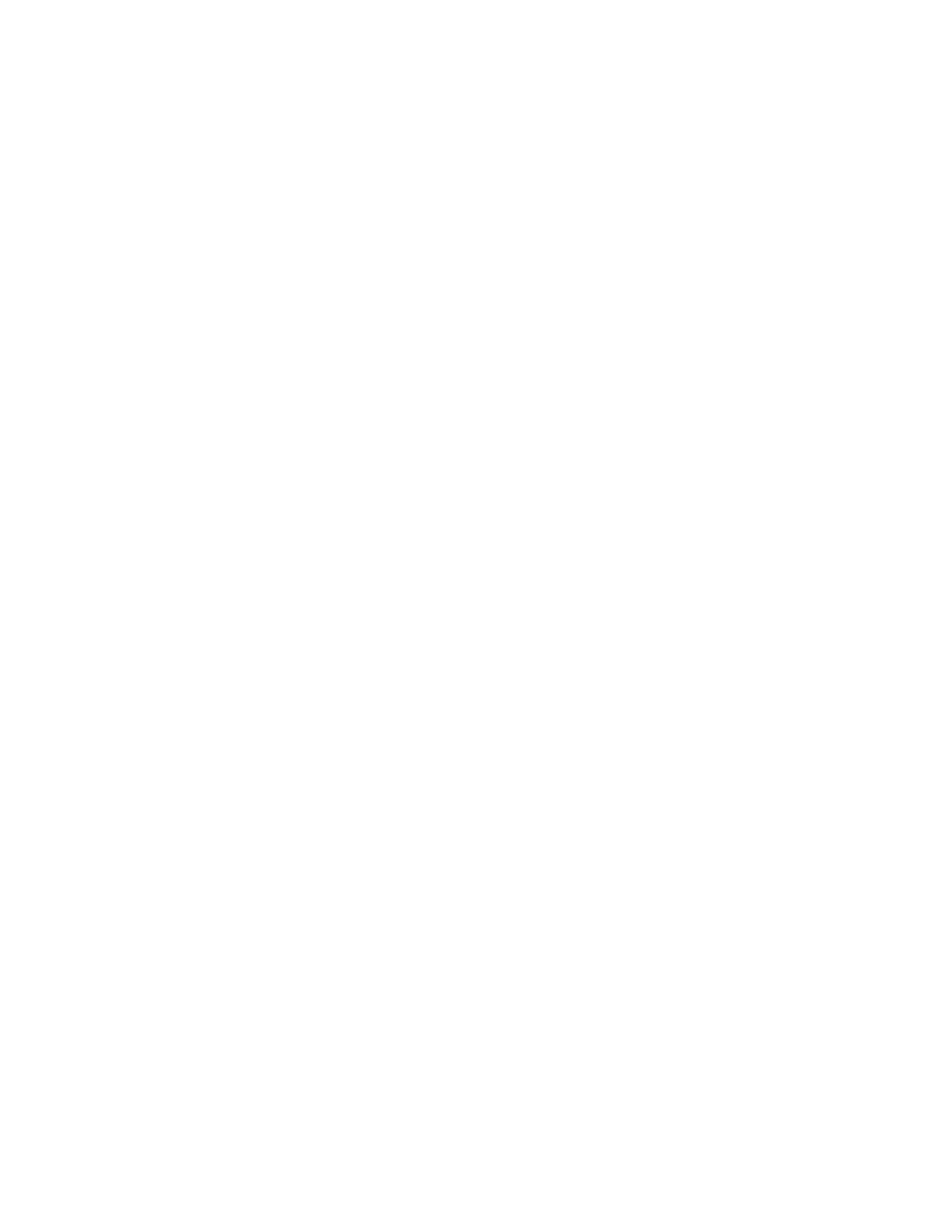}}

    \caption{SER versus SNR of the baselines and all the proposed equalizers.}
    \label{fig:All_results}
\end{figure*}

Fig. \ref{fig:All_results} shows the SER performance of all the proposed methods, including sDR-MMSE, cDR-MMSE, BCD-MMSE, and BCD-MMSE(LRD) equalizers. Considering the bandwidth limitation in practical systems, we run the BCD-MMSE and the BCD-MMSE(LRD) equalizers for one and four iterations, respectively, resulting in similar levels of communication bandwidth. Furthermore, we set $r=K$ for the LRD algorithm.

As shown in Fig. \ref{fig:All_results}, the proposed equalizers' performance ranking is BCD-MMSE(LRD), BCD-MMSE, cDR-MMSE, and sDR-MMSE. Generally speaking, BCD-MMSE and cDR-MMSE methods have comparable performance. Specifically, Fig. \ref{fig:All_results}(a) and (b) show that all methods achieve better performance with increasing the number of BS antennas. Fig. \ref{fig:All_results}(b) and (c) imply that more clusters lead to a lower SER performance, as only a small portion of the covariance matrix becomes locally available. Additionally, Fig. \ref{fig:All_results}(a) and (e) demonstrate that, with an increase in the number of UEs from 8 to 12, the cDR-MMSE equalizer's performance greatly improves due to the rising local compression dimension from 8 to 12. Finally, Fig. \ref{fig:All_results}(f) shows SER performance under 
$\text{IoT}=20$ dB, where the performance of all equalizers is significantly degraded due to stronger interference in colored noise. Moreover, Fig. \ref{fig:All_results}(g) compares the proposed equalizers' performance using QPSK modulation, which results in fewer error symbols than 16-QAM modulation.

To summarize, for the daisy chain architecture, the BCD-MMSE equalizer is an efficient algorithm with fast convergence and near-optimal SER performance. The LRD enhancement further improves the BCD-MMSE equalizer's performance with the same level of bandwidth. For the star architecture, the cDR-MMSE equalizer has better performance but higher complexity than the sDR-MMSE equalizer.

\section{Conclusions}\label{sec_conclusion}
This paper has investigated the decentralized LMMSE equalizer design under the DBP architectures. The existing decentralized equalization algorithms only considered ideal AWGN assumption, whereas colored noise exists in practice. Therefore, we have proposed DR-based and BCD-based equalizers for the star and daisy chain architectures, respectively. The data transfer size of these equalizers is independent of the number of BS antennas, significantly mitigating the bandwidth and computation bottlenecks encountered in centralized counterparts.
In addition, the communication bandwidth of the BCD-MMSE equalizer can be reduced further by applying the LRD algorithm. Extensive simulation results have shown the excellent performance of the proposed algorithms. 

Future work includes extending our decentralized equalization methods to other decentralized architectures such as cell-free massive MIMO systems. Additionally, deep learning techniques (e.g., deep unfolding) may help mitigate bandwidth and computation limitations in our design. Finally, improving local compression matrices design could enhance detection performance when the total number of compression dimensions is constrained.

\appendices
\section{Proof of Proposition \ref{pro_drmmse_performace}}\label{prove_drmmse_performance}
\begin{proof}
From the definition of $\mathbf{E}_{\text{sDR}}$, we have
\begin{equation}\label{proof_of_88}
\begin{aligned}
\mathbf{E}_{\text{sDR}}=&E_s\mathbf{I}-E_s\mathbf{H}^{H}\check{\mathbf{Q}}^{H}\\
&\times\left(\check{\mathbf{Q}}\mathbf{H}\mathbf{H}^{H}\check{\mathbf{Q}}^{H}+\frac{1}{E_s}\check{\mathbf{Q}}\mathbf{R}\check{\mathbf{Q}}^{H}\right)^{-1}\check{\mathbf{Q}}\mathbf{H}.
\end{aligned}
\end{equation}

By leveraging the Woodbury identity\cite{petersen2008matrix}, one has
\begin{equation}\label{proof_of_9}
\begin{aligned}
\mathbf{E}_{\text{sDR}}^{-1}=\frac{1}{E_s}\mathbf{I}+\mathbf{H}^{H}\check{\mathbf{Q}}^{H}\left(\check{\mathbf{Q}}\mathbf{R}\check{\mathbf{Q}}^{H}\right)^{-1}\check{\mathbf{Q}}\mathbf{H}.
\end{aligned}
\end{equation}
Similarly, we can obtain $\mathbf{E}_{\text{cDR}}^{-1}$ by substituting the $\check{\mathbf{Q}}$ in \eqref{proof_of_9} with $\tilde{\mathbf{Q}}$.
To prove $\mathbf{E}_{\text{cDR}}\preceq \mathbf{E}_{\text{sDR}}$, it is equivalent to show $\mathbf{E}_{\text{cDR}}^{-1}\succeq \mathbf{E}_{\text{sDR}}^{-1}$. Further, it is sufficient to show 
\begin{equation}\label{proof_of_10}
\begin{aligned}
\tilde{\mathbf{Q}}^{H}\left(\tilde{\mathbf{Q}}\mathbf{R}\tilde{\mathbf{Q}}^{H}\right)^{-1}\tilde{\mathbf{Q}} \succeq \check{\mathbf{Q}}^{H}\left(\check{\mathbf{Q}}\mathbf{R}\check{\mathbf{Q}}^{H}\right)^{-1}\check{\mathbf{Q}},\end{aligned}
\end{equation}
which is equivalent to 
\begin{equation}\label{proof_of_11}
\begin{aligned}
\mathbf{R}^{\frac{1}{2}}\tilde{\mathbf{Q}}^{H}\left(\tilde{\mathbf{Q}}\mathbf{R}\tilde{\mathbf{Q}}^{H}\right)^{-1}\tilde{\mathbf{Q}}\mathbf{R}^{\frac{1}{2}} \succeq \mathbf{R}^{\frac{1}{2}}\check{\mathbf{Q}}^{H}\left(\check{\mathbf{Q}}\mathbf{R}\check{\mathbf{Q}}^{H}\right)^{-1}\check{\mathbf{Q}}\mathbf{R}^{\frac{1}{2}}.
\end{aligned}
\end{equation}
We observe that $\mathbf{R}^{\frac{1}{2}}\tilde{\mathbf{Q}}^{H}\left(\tilde{\mathbf{Q}}\mathbf{R}\tilde{\mathbf{Q}}^{H}\right)^{-1}\tilde{\mathbf{Q}}\mathbf{R}^{\frac{1}{2}}$ is an orthogonal projection matrix with range space $R\left(\mathbf{R}^{\frac{1}{2}}\tilde{\mathbf{Q}}^{H}\right)$, and $R\left(\mathbf{R}^{\frac{1}{2}}\tilde{\mathbf{Q}}^{H}\right) \supseteq R\left(\mathbf{R}^{\frac{1}{2}}\check{\mathbf{Q}}^{H}\right)$ holds obviously (Notice that the opposite is not necessarily true). Therefore, \eqref{proof_of_11} must hold. Thus we have $\mathbf{E}_{\text{sDR}}\succeq \mathbf{E}_{\text{cDR}}$, which immediately implies $\text{Tr}(\mathbf{E}_{\text{sDR}})\geq \text{Tr}(\mathbf{E}_{\text{cDR}})$. Further combining with $\text{Tr}(\mathbf{E}_{\text{sDR}})=\mathbb{E}\left[\left\| \hat{\mathbf{s}}_{\text{sDR-MMSE}}-\mathbf{s} \right\|_2^{2}\right]$. The proof is concluded.
\end{proof}

\section{Proof of Proposition \ref{thm_BCD_solution}}\label{pro_bcd_solution}
\begin{proof}
Since the objective function of problem \eqref{eq_MMSE_problem_sample_DBP} is convex in $\mathbf{W}_{c}$, we could obtain the optimal solution by setting the gradient equal to $\mathbf{0}$.
Note that the expectation in problem \eqref{eq_MMSE_problem_sample_DBP} is taken with respect to $\mathbf{s}$, it can be rewritten as
\begin{equation}\label{eq_proof_BCD}
\begin{aligned}
     &\frac{1}{N}\sum_{i=1}^{N}\mathbb{E}\Bigg[\Bigg\| \mathbf{W}_c\mathbf{H}_c\mathbf{s}+\sum_{j=1,j \neq c}^{C}\mathbf{W}_{j}\mathbf{H}_{j}\mathbf{s}+\mathbf{W}_c\mathbf{n}_c^i\\
     &\quad \quad\quad\quad\quad\quad\quad\quad\quad+ \sum_{j=1,j \neq c}^{C}\mathbf{W}_{j}\mathbf{n}_{j}^i-\mathbf{s} \Bigg\|_2^{2}\Bigg]\\
    =&\Re e\Bigg(\tr\Bigg(E_s\mathbf{W}_{c}\mathbf{H}_{c}\mathbf{H}_{c}^{H}\mathbf{W}_{c}^{H} + 2E_s\sum_{j=1,j \neq c}^{C}\mathbf{W}_{c}\mathbf{H}_{c}\mathbf{H}_{j}^{H}\mathbf{W}_{j}^{H}\\
  &\quad-2E_s\mathbf{W}_{c}\mathbf{H}_{c}+\mathbf{W}_{c}\left(\frac{1}{N}\sum_{i=1}^{N}\mathbf{n}^i_c(\mathbf{n}^i_c)^H\right)\mathbf{W}_{c}^{H}\\
  &\quad+2\sum_{j=1,j \neq c}^{C}\mathbf{W}_{c}\left(\frac{1}{N}\sum_{i=1}^{N}\mathbf{n}^i_j(\mathbf{n}^i_c)^H\right)\mathbf{W}_{j}^{H}\Bigg)\Bigg)+\text{const},
  \end{aligned}
\end{equation}
where $\Re e(a)$ is the real part of a complex scalar $a$, ``const'' denotes the terms independent of $\mathbf{W}_{c}$, and the second equality is due to the independence of $\mathbf{s}$ and $\mathbf{n}$.
Denote $f(\cdot)$ as the function in \eqref{eq_proof_BCD} and calculate its gradient with respect to $\mathbf{W}_{c}$, yielding
\begin{equation}\label{gradient}
  \begin{aligned}
\nabla_{\mathbf{W}_{c}}f(\mathbf{W}_c)=&2E_s\mathbf{W}_{c}\mathbf{H}_{c}\mathbf{H}_{c}^{H}+2E_s\sum_{j=1,j \neq c}^{C}\mathbf{W}_{j}\mathbf{H}_{j}\mathbf{H}_{c}^{H}\\
&-2E_s\mathbf{H}_{c}^{H} +2\mathbf{W}_{c}\left(\frac{1}{N}\sum_{i=1}^{N}\mathbf{n}^i_c(\mathbf{n}^i_c)^H\right)\\
&+2\sum_{j=1,j \neq c}^{C}\mathbf{W}_{j}\left(\frac{1}{N}\sum_{i=1}^{N}\mathbf{n}^i_j(\mathbf{n}^i_c)^H\right).
  \end{aligned}
\end{equation}
Letting $\nabla_{\mathbf{W}_{c}}f(\mathbf{W}_c^{\ast})=\mathbf{0}$, we obtain the optimal solution $\mathbf{W}_{c}^{\ast}$ in closed form as in \eqref{BCDMMSE-solution}.
\end{proof}

\bibliographystyle{IEEEtran}
\bibliography{to_bibitem}

% Generated by IEEEtran.bst, version: 1.14 (2015/08/26)
\begin{thebibliography}{10}
\providecommand{\url}[1]{#1}
\csname url@samestyle\endcsname
\providecommand{\newblock}{\relax}
\providecommand{\bibinfo}[2]{#2}
\providecommand{\BIBentrySTDinterwordspacing}{\spaceskip=0pt\relax}
\providecommand{\BIBentryALTinterwordstretchfactor}{4}
\providecommand{\BIBentryALTinterwordspacing}{\spaceskip=\fontdimen2\font plus
\BIBentryALTinterwordstretchfactor\fontdimen3\font minus
  \fontdimen4\font\relax}
\providecommand{\BIBforeignlanguage}[2]{{%
\expandafter\ifx\csname l@#1\endcsname\relax
\typeout{** WARNING: IEEEtran.bst: No hyphenation pattern has been}%
\typeout{** loaded for the language `#1'. Using the pattern for}%
\typeout{** the default language instead.}%
\else
\language=\csname l@#1\endcsname
\fi
#2}}
\providecommand{\BIBdecl}{\relax}
\BIBdecl

\bibitem{zhao2021decentralized}
X.~Zhao, X.~Guan, M.~Li, and Q.~Shi, ``Decentralized linear {MMSE} equalizer
  under colored noise for massive {MIMO} systems,'' in \emph{Proc. IEEE Global
  Commun. Conf.}, 2021, pp. 01--06.

\bibitem{zhang2020prospective}
J.~Zhang, E.~Bj{\"o}rnson, M.~Matthaiou, D.~W.~K. Ng, H.~Yang, and D.~J. Love,
  ``Prospective multiple antenna technologies for beyond 5{G},'' \emph{IEEE J.
  Sel. Areas Commun.}, vol.~38, no.~8, pp. 1637--1660, Aug. 2020.

\bibitem{marzetta2016fundamentals}
T.~L. Marzetta and H.~Q. Ngo, \emph{Fundamentals of massive MIMO}.\hskip 1em
  plus 0.5em minus 0.4em\relax Cambridge, U.K.: Cambridge Univ. Press, 2016.

\bibitem{wang2019overview}
M.~Wang, F.~Gao, S.~Jin, and H.~Lin, ``An overview of enhanced massive {MIMO}
  with array signal processing techniques,'' \emph{IEEE J. Sel. Topics Signal
  Process.}, vol.~13, no.~5, pp. 886--901, Sep. 2019.

\bibitem{rusek2012scaling}
F.~Rusek, D.~Persson, B.~K. Lau, E.~G. Larsson, T.~L. Marzetta, O.~Edfors, and
  F.~Tufvesson, ``Scaling up {MIMO}: Opportunities and challenges with very
  large arrays,'' \emph{IEEE Signal Process. Mag.}, vol.~30, no.~1, pp. 40--60,
  Jan. 2013.

\bibitem{li2017decentralized}
K.~Li, R.~R. Sharan, Y.~Chen, T.~Goldstein, J.~R. Cavallaro, and C.~Studer,
  ``Decentralized baseband processing for massive {MU-MIMO} systems,''
  \emph{IEEE J. Emerg. Sel. Topics Circuits Syst.}, vol.~7, no.~4, pp.
  491--507, Dec. 2017.

\bibitem{li2016decentralized}
K.~Li, Y.~Chen, R.~Sharan, T.~Goldstein, J.~R. Cavallaro, and C.~Studer,
  ``Decentralized data detection for massive {MU-MIMO} on a {Xeon} {Phi}
  cluster,'' in \emph{Proc. Asilomar Conf. Signals, Syst., Comput.}, 2016, pp.
  468--472.

\bibitem{sanchez2020decentralized}
J.~R. S{\'a}nchez, F.~Rusek, O.~Edfors, M.~Sarajli{\'c}, and L.~Liu,
  ``Decentralized massive {MIMO} processing exploring daisy-chain architecture
  and recursive algorithms,'' \emph{IEEE Trans. Signal Process.}, vol.~68, pp.
  687--700, Jan. 2020.

\bibitem{eCPRI}
``Common public radio interface,'' [Online]. Available:
  \url{http://www.cpri.info}, 2019.

\bibitem{jeon2019decentralized}
C.~Jeon, K.~Li, J.~R. Cavallaro, and C.~Studer, ``Decentralized equalization
  with feedforward architectures for massive {MU-MIMO},'' \emph{IEEE Trans.
  Signal Process.}, vol.~67, no.~17, pp. 4418--4432, Sep. 2019.

\bibitem{li2019decentralized}
K.~Li, O.~Castaneda, C.~Jeon, J.~R. Cavallaro, and C.~Studer, ``Decentralized
  coordinate-descent data detection and precoding for massive {MU-MIMO},'' in
  \emph{Proc. IEEE Int. Symp. Circuits Syst.}, 2019, pp. 1--5.

\bibitem{jeon2017achievable}
C.~Jeon, K.~Li, J.~R. Cavallaro, and C.~Studer, ``On the achievable rates of
  decentralized equalization in massive {MU-MIMO} systems,'' in \emph{Proc.
  IEEE Int. Symp. Inf. Theory}, 2017, pp. 1102--1106.

\bibitem{wang2020expectation}
H.~Wang, A.~Kosasih, C.-K. Wen, S.~Jin, and W.~Hardjawana, ``Expectation
  propagation detector for extra-large scale massive {MIMO},'' \emph{IEEE
  Trans. Wireless Commun.}, vol.~19, no.~3, pp. 2036--2051, Mar. 2020.

\bibitem{zhang2020decentralized}
Z.~Zhang, H.~Li, Y.~Dong, X.~Wang, and X.~Dai, ``Decentralized signal detection
  via expectation propagation algorithm for uplink massive {MIMO} systems,''
  \emph{IEEE Trans. Veh. Technol.}, vol.~69, no.~10, pp. 11\,233--11\,240, Oct.
  2020.

\bibitem{amiri2021uncoordinated}
A.~Amiri, C.~N. Manch{\'o}n, and E.~De~Carvalho, ``Uncoordinated and
  decentralized processing in extra-large {MIMO} arrays,'' \emph{IEEE Wireless
  Commun. Lett.}, vol.~11, no.~1, pp. 81--85, Jan. 2022.

\bibitem{sanchez2019decentralized}
J.~R. S{\'a}nchez, J.~V. Alegr{\'\i}a, and F.~Rusek, ``Decentralized massive
  {MIMO} systems: Is there anything to be discussed?'' in \emph{Proc. IEEE Int.
  Symp. Inf. Theory}, 2019, pp. 787--791.

\bibitem{zhang2021decentralized}
Z.~Zhang, Y.~Dong, K.~Long, X.~Wang, and X.~Dai, ``Decentralized baseband
  processing with {Gaussian} message passing detection for uplink massive
  {MU-MIMO} systems,'' \emph{IEEE Trans. Veh. Technol.}, vol.~71, no.~2, pp.
  2152--2157, Feb. 2022.

\bibitem{kulkarni2021hardware}
A.~Kulkarni, M.~A. Ouameur, and D.~Massicotte, ``Hardware topologies for
  decentralized large-scale {MIMO} detection using {Newton} method,''
  \emph{IEEE Trans. Circuits Syst. I, Reg. Papers}, vol.~68, no.~9, pp.
  3732--3745, Sep. 2021.

\bibitem{helmersson2022uplink}
K.~W. Helmersson, P.~Frenger, and A.~Helmersson, ``Uplink {D-MIMO} with
  decentralized subset combining,'' in \emph{Proc. IEEE Int. Conf. Commun.
  Workshops (ICC Workshops)}.\hskip 1em plus 0.5em minus 0.4em\relax IEEE,
  2022, pp. 5134--5139.

\bibitem{shaik2021mmse}
Z.~H. Shaik, E.~Bj{\"o}rnson, and E.~G. Larsson, ``{MMSE}-optimal sequential
  processing for cell-free massive {MIMO} with radio stripes,'' \emph{IEEE
  Trans. Commun.}, vol.~69, no.~11, pp. 7775--7789, Nov. 2021.

\bibitem{bertsekas1999nonlinear}
D.~P. Bertsekas, \emph{Nonlinear programming}.\hskip 1em plus 0.5em minus
  0.4em\relax Cambridge, MA, USA: MIT Press, 1999.

\bibitem{lu2014overview}
L.~Lu, G.~Y. Li, A.~L. Swindlehurst, A.~Ashikhmin, and R.~Zhang, ``An overview
  of massive {MIMO}: Benefits and challenges,'' \emph{IEEE J. Sel. Topics
  Signal Process.}, vol.~8, no.~5, pp. 742--758, Oct. 2014.

\bibitem{kay1993fundamentals}
S.~M. Kay, \emph{Fundamentals of Statistical Signal Processing: Estimation
  Theory}.\hskip 1em plus 0.5em minus 0.4em\relax Upper Saddle River, NJ, USA:
  Prentice-Hall, 1993.

\bibitem{barriac2004space}
G.~Barriac and U.~Madhow, ``Space-time communication for {OFDM} with implicit
  channel feedback,'' \emph{IEEE Trans. Inf. Theory,}, vol.~50, no.~12, pp.
  3111--3129, Dec. 2004.

\bibitem{schizas2007distributed}
I.~D. Schizas, G.~B. Giannakis, and Z.-Q. Luo, ``Distributed estimation using
  reduced-dimensionality sensor observations,'' \emph{IEEE Trans. Signal
  Process.}, vol.~55, no.~8, pp. 4284--4299, Aug. 2007.

\bibitem{song2005sensors}
E.~Song, Y.~Zhu, and J.~Zhou, ``Sensors’ optimal dimensionality compression
  matrix in estimation fusion,'' \emph{Automatica}, vol.~41, no.~12, pp.
  2131--2139, Nov. 2005.

\bibitem{hong2017iteration}
M.~Hong, X.~Wang, M.~Razaviyayn, and Z.-Q. Luo, ``Iteration complexity analysis
  of block coordinate descent methods,'' \emph{Math. Program.}, vol. 163,
  no.~1, pp. 85--114, May 2017.

\bibitem{eckart1936approximation}
C.~Eckart and G.~Young, ``The approximation of one matrix by another of lower
  rank,'' \emph{Psychometrika}, vol.~1, no.~3, pp. 211--218, Sep. 1936.

\bibitem{jaeckel2014quadriga}
S.~Jaeckel, L.~Raschkowski, K.~B{\"o}rner, and L.~Thiele, ``Quadriga: A {3-D}
  multi-cell channel model with time evolution for enabling virtual field
  trials,'' \emph{IEEE Trans. Antennas Propag.}, vol.~62, no.~6, pp.
  3242--3256, Jun. 2014.

\bibitem{petersen2008matrix}
K.~B. Petersen and M.~S. Pedersen, \emph{The Matrix Cookbook}.\hskip 1em plus
  0.5em minus 0.4em\relax Lyngby, Denmark: Technical University of Denmark,
  2008.

\end{thebibliography}


\begin{thebibliography}{10}

\bibitem{mimo1}
T.~L. Marzetta and H.~Q. Ngo, {\em Fundamentals of massive MIMO}.
\newblock Cambridge University Press, 2016.

\bibitem{mimo2}
M.~Wang, F.~Gao, S.~Jin, and H.~Lin, ``An overview of enhanced massive mimo
  with array signal processing techniques,'' {\em IEEE Journal of Selected
  Topics in Signal Processing}, vol.~13, pp.~886--901, Aug 2019.

\bibitem{mimo3}
E.~Bjornson, J.~Hoydis, M.~Kountouris, and M.~Debbah, ``Massive mimo systems
  with non-ideal hardware: Energy efficiency, estimation, and capacity
  limits,'' {\em IEEE Transactions on Information Theory}, vol.~60,
  p.~7112–7139, Nov 2014.

\bibitem{li2017baseband}
K.~Li, R.~R. Sharan, Y.~Chen, T.~Goldstein, J.~R. Cavallaro, and C.~Studer,
  ``Decentralized baseband processing for massive mu-mimo systems,'' {\em IEEE
  Journal on Emerging and Selected Topics in Circuits and Systems}, vol.~7,
  pp.~491--507, Dec 2017.

\bibitem{Rodriguez2020}
J.~R. S{\'a}nchez, F.~Rusek, O.~Edfors, M.~Sarajli{\'c}, and L.~Liu,
  ``Decentralized massive mimo processing exploring daisy-chain architecture
  and recursive algorithms,'' {\em IEEE Transactions on Signal Processing},
  vol.~68, pp.~687--700, Jan 2020.

\bibitem{li2019cd}
K.~{Li}, O.~{Castañeda}, C.~{Jeon}, J.~R. {Cavallaro}, and C.~{Studer},
  ``Decentralized coordinate-descent data detection and precoding for massive
  mu-mimo,'' in {\em 2019 IEEE International Symposium on Circuits and Systems
  (ISCAS)}, pp.~1--5, 2019.

\bibitem{Jeon2019feedforward}
C.~Jeon, K.~Li, J.~R. Cavallaro, and C.~Studer, ``Decentralized equalization
  with feedforward architectures for massive mu-mimo,'' {\em IEEE Transactions
  on Signal Processing}, vol.~67, pp.~4418--4432, Jul 2019.

\bibitem{Sanchez2019discuss}
J.~R. {Sánchez}, J.~{Vidal Alegría}, and F.~{Rusek}, ``Decentralized massive
  mimo systems: Is there anything to be discussed?,'' in {\em 2019 IEEE
  International Symposium on Information Theory (ISIT)}, pp.~787--791, 2019.

\bibitem{li2019design}
K.~{Li}, J.~{McNaney}, C.~{Tarver}, O.~{Castañeda}, C.~{Jeon}, J.~R.
  {Cavallaro}, and C.~{Studer}, ``Design trade-offs for decentralized baseband
  processing in massive mu-mimo systems,'' in {\em 2019 53rd Asilomar
  Conference on Signals, Systems, and Computers}, pp.~906--912, 2019.

\bibitem{2016Bertsekas}
M.~S. Bazaraa, {\em Nonlinear Programming: Theory and Algorithms, 3rd Edition
  Set}.
\newblock John Wiley Sons, 2014.

\bibitem{quadriga}
S.~{Jaeckel}, L.~{Raschkowski}, K.~{Börner}, and L.~{Thiele}, ``Quadriga: A
  3-d multi-cell channel model with time evolution for enabling virtual field
  trials,'' {\em IEEE Transactions on Antennas and Propagation}, vol.~62,
  pp.~3242--3256, Mar 2014.

\bibitem{sun2016efficient}
D.~Sun, K.-C. Toh, and L.~Yang, ``An efficient inexact abcd method for least
  squares semidefinite programming,'' {\em SIAM Journal on Optimization},
  vol.~26, p.~1072–1100, May 2016.

\bibitem{Bertsekas1997Distributed}
D.~P. Bertsekas and J.~N. Tsitsiklis, {\em Parallel and distributed
  computation: numerical methods}.
\newblock Athena Scientific, 2014.

\bibitem{Hong2017BCD}
M.~Hong, X.~Wang, M.~Razaviyayn, and Z.-Q. Luo, ``Iteration complexity analysis
  of block coordinate descent methods,'' {\em Mathematical Programming},
  vol.~163, pp.~85--114, May 2017.

\end{thebibliography}

\vfill

\end{document}